\newtheorem{theorem}{Theorem}
\newtheorem{lemma}{Lemma}
\newtheorem{proposition}{Proposition}
\newtheorem{corollary}{Corollary}
\newtheorem{remark}{Remark}
\def\bb0{{\mathbb{0}}}
\def\bb{{\mathbf{b}}}
\def\b0{{\mathbf{0}}}
\def\sf0{{\mathsf{0}}}
\begin{document}
%
\title{Wirelessly Powered Communication Networks with Short Packets}


\author{Talha Ahmed Khan${}^{*}$, Robert W. Heath Jr.${}^{*}$ and Petar Popovski${}^{\dagger}$\\

\thanks{${}^{*}$ Talha Ahmed Khan and Robert W. Heath Jr. are with the Department of Electrical and Computer Engineering at The University of Texas at Austin, USA (Email: \{talhakhan, rheath\}@utexas.edu).}
\thanks{${}^{\dagger}$ Petar Popovski is with the Department of Electronic Systems at Aalborg University, Denmark (Email: petarp@es.aau.dk).}
\thanks{This work was supported in part by the Army Research Office under grant W911NF-14-1-0460, and a gift from Mitsubishi Electric Research Labs.}}
\maketitle
\begin{abstract}
Wirelessly powered communications will entail short packets due to naturally small payloads, low-latency requirements and/or insufficient energy resources to support longer transmissions. In this paper, a wirelessly powered communication system is investigated where an energy harvesting transmitter, charged by one or more power beacons via wireless energy transfer, attempts to communicate with a receiver over a noisy channel. 
Under a save-then-transmit protocol, the system performance is characterized using metrics such as the energy supply probability at the transmitter, and the achievable rate at the receiver for the case of short packets. 
Leveraging the framework of finite-length information theory, tractable analytical expressions are derived for the considered metrics in terms of system parameters such as the harvest blocklength, the transmit blocklength, the harvested power and the transmit power.  
The analysis provides several useful design guidelines.
Though using a small transmit power or a small transmit blocklength helps avoid energy outages, the consequently smaller signal-to-noise ratio or the fewer coding opportunities may cause an information outage.  
Scaling laws are derived to capture this inherent trade-off between the harvest and transmit blocklengths.
Moreover, the asymptotically optimal transmit power is derived in closed-form. Numerical results reveal that power control is essential for improving the achievable rate of the system in the finite blocklength regime. The asymptotically optimal transmit power yields nearly optimal performance in the finite blocklength regime. 
\end{abstract}

\begin{IEEEkeywords}
Energy harvesting, wireless information and power transfer, energy supply probability, wireless power transfer, power control, finite-length information theory, non-asymptotic achievable rate.
\end{IEEEkeywords}
%
\IEEEpeerreviewmaketitle

\section{Introduction}
With wireless devices getting smaller and more energy-efficient, energy harvesting is emerging as a potential technology for powering such miniature devices \cite{EnergyHarvestWirelessCommSurvey2015,RFsurveyLondon,talla2015powering,GollakotaRF,Vullers2009684}. This is attractive for future paradigms such as the Internet of Things (IoT), where powering a massive number of devices will be a major challenge\cite{IoT2014}. 
Many IoT applications will entail sensors with sporadic sensing and communication activity, resulting in an {average} power requirement on the order of microwatts to milliwatts. 
Depending on the application, the sensor may harvest energy from ambient sources such as solar, thermal, kinetic, or RF (radio frequency) waves 
\cite{RFsurveyLondon,talla2015powering,GollakotaRF,Vullers2009684,EnergyHarvestWirelessCommSurvey2015}.
Of interest to this work is RF or wireless energy harvesting, where a harvesting node extracts energy from the incident RF signals. 
This is a suitable option for ultra low-power applications because i) wireless signals are available anywhere and anytime, ii) the harvesting operation relies on a simple circuit consisting of a rectifying antenna which can be integrated with the communication circuitry in small form factors\cite{Kaibin2015Cut}, and iii) the energy delivered to the harvester can be controlled by leveraging the wireless infrastructure\cite{talla2015powering,Kaibin2015Cut}. 
In contrast to most wireless systems designed for Internet access, the energy harvesting communication systems used in IoT applications will likely feature short packets. This is due to intrinsically small data payloads, low-latency requirements, and/or lack of energy resources to support longer transmissions\cite{EnergyHarvestWirelessCommSurvey2015,durisi2015towards,yang2014finite,fong2015non}.  

For an energy harvesting system with short packets, the capacity analysis conducted in the asymptotic blocklength regime could be misleading. This has spurred research characterizing the performance of an energy harvesting communication system in the non-asymptotic or finite blocklength regime \cite{polyanskiy2010finite,yang2014finite,fong2015non,shenoy2016finite,guo2016finite,ebrahim2015lowlatency}. This line of research leverages the finite-blocklength information theoretic framework proposed in \cite{polyanskiy2010finite} (see \cite{Tan2014} for an overview). 
The work in \cite{yang2014finite} was first to investigate energy harvesting channels in the finite blocklength regime. In \cite{yang2014finite}, the non-asymptotic achievable rate was characterized for a noiseless binary communications channel with an energy harvesting transmitter. This work was extended to the case of an additive white Gaussian noise (AWGN) channel and to more general discrete memoryless channels in \cite{fong2015non}. For an energy harvesting transmitter operating under a save-then-transmit protocol (first proposed in \cite{ozel2012awgn}), a lower bound on the achievable rate at the receiver was derived in the finite blocklength regime\cite{fong2015non}. For the setup considered in \cite{fong2015non}, the work in \cite{shenoy2016finite} provided tighter bounds on the non-asymptotic achievable rate for an AWGN energy harvesting channel.
The authors in \cite{guo2016finite} investigated the mean delay of an energy harvesting channel in the finite blocklength regime.
Unlike the work in \cite{yang2014finite,fong2015non,shenoy2016finite,guo2016finite} which assume an infinite battery at the energy harvester, \cite{ebrahim2015lowlatency} conducted a finite-blocklength analysis for a battery-less energy harvesting channel.  

The capacity analysis of energy harvesting channels in the asymptotic blocklength regime has received considerable attention\cite{ozel2012awgn,ozel11zerobattery,mao13eh,jog14,ozgur15,fund15eh}. The capacity of an energy harvesting AWGN channel under stochastic energy arrivals was derived in \cite{ozel2012awgn} assuming an infinite battery at the energy harvester. For a similar setup, the capacity analysis for a battery-less energy harvester was conducted in \cite{ozel11zerobattery}.
An energy harvesting transmitter with a finite battery was considered in \cite{mao13eh}, and the capacity was analyzed using Shannon strategies for discrete memoryless channels.
The capacity of an energy harvesting AWGN channel with a finite battery was considered in \cite{jog14} for the case of deterministic energy arrivals. Also assuming a finite battery, the approximate capacity of an energy harvesting AWGN channel with Bernoulli energy arrivals was derived in \cite{ozgur15}. A comprehensive review of the capacity of energy harvesting channels is provided in \cite{fund15eh}. 

In this paper, we investigate the performance of a wireless-powered communication system where an RF energy harvesting node, charged by wireless power beacons via wireless energy transfer, attempts to communicate with a receiver over an AWGN channel.
We conduct the analysis for two cases. We first provide an analytical treatment for the case of a single power beacon. We then extend the analysis to a large-scale Poisson network with multiple power beacons. 
Using the framework of finite-length information theory\cite{polyanskiy2010finite}, we characterize the energy supply probability and the achievable rate of the considered system with short packets, i.e., in the non-asymptotic or finite blocklength regime. Leveraging the analytical results, we expose the interplay between key system parameters such as the harvest and transmit blocklengths, the average harvested power, and the transmit power. 
We analytically characterize the scaling laws for the harvest and transmit blocklengths in terms of the transmit-to-harvest power ratio and the target error probability.
We also provide closed-form analytical expressions for the asymptotically optimal transmit power. Numerical results reveal that the asymptotically optimal transmit power yields nearly optimal performance in the finite blocklength regime. We also examine how the power beacon transmit power and density impacts the overall performance.

Our work differs from the existing literature on several accounts.
First, the prior work \cite{fong2015non,yang2014finite,ebrahim2015lowlatency,guo2016finite,shenoy2016finite} on energy harvesting systems in the finite blocklength regime falls short of characterizing the performance for the case of wireless energy harvesting. Second, most prior work \cite{fong2015non,yang2014finite,ebrahim2015lowlatency,guo2016finite,ozel2012awgn,shenoy2016finite} implicitly assumes concurrent harvest and transmit operation, which may be infeasible in practice. For example, a power beacon may remain silent during the communication phase to avoid interfering with the communication link\cite{Kaibin2015Cut}. 
Third, none of these finite-blocklength analyses treats the case of multiple power beacons. 
This paper is an extension of our conference paper \cite{khan2016wirelessly}, where limited analytical results were provided for the case of a single power beacon. 

The paper is organized as follows. The system model is described in Section \ref{secSys}. The analytical characterization of the energy supply probability and the achievable rate for the case of single power beacon is presented in Section \ref{secAnl}. Section \ref{secMP} extends the analysis to include multiple power beacons. Simulation results are provided in Section \ref{secSim}. The paper is concluded in Section \ref{secConc}.
\section{System Model}\label{secSys}
We consider a wireless-powered communication system where one or more wireless power beacons (PBs) use wireless energy transfer to charge an energy harvesting (EH) node, which then attempts to communicate with another receiver (RX) using the harvested energy (see Fig. \ref{fig:sysmod}).  
The nodes are assumed to be equipped with a single antenna each. 
We present an analytical treatment for two cases: i) the energy harvesting node is powered by a single power beacon, and ii) the energy harvesting node is powered by a large-scale network consisting of multiple power beacons. 
We now describe the system model for the case of a single power beacon. Any additional description for the case of multiple power beacons will be provided in Section \ref{secMP}. 
We assume that the energy harvester uses a \emph{save-then-transmit} protocol \cite{ozel2012awgn} to enable wireless-powered communications. 
The considered protocol divides the communication frame consisting of $S$ channel uses (or slots) into an energy harvesting phase having $m$ channel uses, and an information transmission phase having $n$ channel uses. 
The first $m$ channel uses are used for harvesting energy from the RF signals transmitted by the power beacon, which is then saved in a (sufficiently large) energy buffer.
This is followed by an information transmission phase consisting of $n$ channel uses, where the transmitter uses the harvested energy to transmit information to the receiver. We call $m$ the \textit{harvest} blocklength, $n$ the \textit{transmit} blocklength, and $S=m+n$ the \textit{total} blocklength or frame size. 
We will conduct the subsequent analysis for the non-asymptotic blocklength regime, i.e., for the practical case of \emph{short packets} where the total blocklength is finite.
\begin{figure} [t]
	\centerline{	\includegraphics[width=.7\columnwidth]{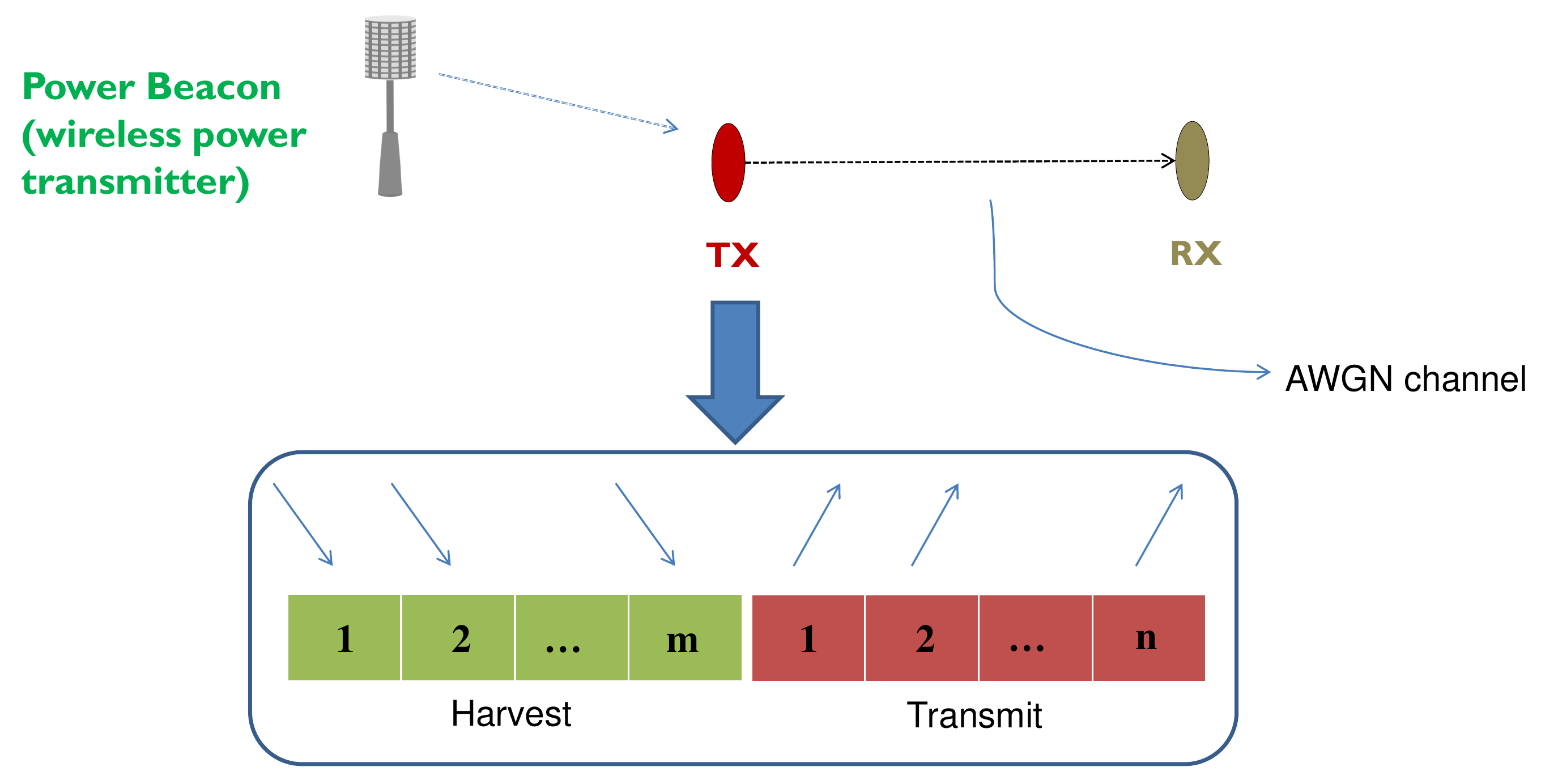}}
	\caption{A single power beacon charges an energy harvesting node, which operates under a save-then-transmit protocol to communicate with its desired receiver.}
	\label{fig:sysmod}
\end{figure}

\subsection{Energy Harvesting Phase} 
The signal transmitted by a power beacon experiences distance-dependent path loss and channel fading before reaching the energy harvesting node. The harvested energy is, therefore, a random quantity due to the underlying randomness of the wireless link. 
We let random variable $Z_i=\frac{\mu}{\ell(r,\eta) }P_{\rm{PB}} H_i$ model the energy (or power) harvested in slot $i$ ($i=1,\cdots,m$), where $\mu\in(0,1]$ denotes the conversion efficiency of the energy harvester, $P_{\rm{PB}}$ is the PB transmit power (i.e., energy per PB symbol), $\ell(r,\eta)$ gives the average large-scale path loss given a PB-EH link distance $r$ and path loss exponent $\eta>2$, while the random variable $H_i$ denotes the small-scale channel gain. Note that we have ignored the energy due to noise since it is negligibly small. 
We consider quasi-static block flat Rayleigh fading for the PB-EH links such that the channel remains constant over (the harvesting phase of) a frame, and randomly changes to a new value for the next frame. 
In other words, the energy arrivals within a harvesting phase are fully correlated, i.e., $Z_i=Z_1\equiv Z,\,\forall\,i=1,2,\cdots,m$, where $Z_i$ is exponentially distributed with mean $\mathbb{E}[Z_i]\triangleq P_{\rm{E}}=\frac{\mu}{\ell(r,\eta)} P_{\rm{PB}}$.
This is motivated by the observation that the harvest blocklength in a \emph{short-packet} communication system would typically be smaller than the channel coherence time. 
\subsection{Information Transmission Phase} The energy harvesting phase is followed by an information transmission phase where the EH node attempts to communicate with a destination RX node over an unreliable AWGN channel.
Contrary to the harvesting operation, here noise plays a significant role. 
We assume that the EH node uses a Gaussian codebook for signal transmission (see Section \ref{secIT}).
We let $X_\ell$ be the signal intended for transmission in slot $\ell$ with average power $P_{\rm{t}}$, where $\ell=1,\cdots,n$, and $n$ is fixed. 
The resulting (intended) sequence $X^n=\left(X_1,\cdots,X_n\right)$ consists of independent and identically distributed (IID) Gaussian random variables such that $X_\ell$ {\raise.17ex\hbox{$\scriptstyle{\sim}$}}~$\mathcal{N}(0,P_{\rm{t}})$. 
To transmit the intended sequence $X^n$ over the transmit blocklength, the EH node needs to satisfy the following energy constraints. 
\begin{align}\label{eq:constraint}
	\sum_{\ell=1}^{k}X_\ell^2&\leq\sum_{i=1}^{m}Z_i\qquad k=1,2,\cdots,n.
\end{align}
The following lemma simplifies the multiple energy constraints into a single constraint.
\begin{lemma}\normalfont\label{lemma:energy} For a random sequence $\{X_\ell\}_{\ell=1}^{n}$ for the transmit phase, and a random energy sequence $\{Z_i\}_{i=1}^{m}$ for the harvest phase, the probability of violating the energy constraints in (\ref{eq:constraint}) is given by
	\begin{align}
	\Pr\left[{\bigcup_{k=1}^{n}}\left\{\sum_{\ell=1}^{k}X_\ell^2\leq\sum_{i=1}^{m}Z_i\right\} \right]&=1-
	\Pr\left[\sum_{\ell=1}^{n}X_\ell^2\leq\sum_{i=1}^{m}Z_i \right].
	\end{align}

\begin{proof}
	The result follows by noting that 
	\begin{align}
	\Pr\left[{\bigcap_{k=1}^{n}}\left\{\sum_{\ell=1}^{k}X_\ell^2\leq\sum_{i=1}^{m}Z_i\right\} \right]&=
	\Pr\left[\left\{\sum_{\ell=1}^{n}X_\ell^2\leq\sum_{i=1}^{m}Z_i\right\} \right]\nonumber\\
	&\quad\underbrace{\times\Pr\left[{\bigcap_{k=1}^{n-1}}\left\{\sum_{\ell=1}^{k}X_\ell^2\leq\sum_{i=1}^{m}Z_i\right\}\bigg| {\sum_{\ell=1}^{n}X_\ell^2\leq\sum_{i=1}^{m}Z_i}\right]}_{1}\nonumber
	\end{align}
\end{proof}
\end{lemma}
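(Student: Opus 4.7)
The plan is to exploit monotonicity of the cumulative transmit-energy sequence. Let $T := \sum_{i=1}^{m} Z_i$ denote the total harvested energy over the $m$-slot harvest phase, and define the partial sums $S_k := \sum_{\ell=1}^{k} X_\ell^2$ for $k = 1, \ldots, n$. Because each summand $X_\ell^2$ is non-negative (it is a squared real-valued transmit symbol), the sequence $\{S_k\}_{k=1}^{n}$ is non-decreasing in $k$ along every sample path, so $S_1 \leq S_2 \leq \cdots \leq S_n$ almost surely.

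The key observation is then that $\{S_n \leq T\}$ implies $\{S_k \leq T\}$ for every $k \in \{1, \ldots, n\}$, almost surely. Hence
$$\bigcap_{k=1}^{n} \{S_k \leq T\} \;=\; \{S_n \leq T\}$$
as events (up to a null set), and taking probabilities of both sides yields
$$\Pr\!\left[\bigcap_{k=1}^{n} \{S_k \leq T\}\right] \;=\; \Pr[S_n \leq T].$$
The probability of \emph{violating} at least one of the cumulative constraints in \eqref{eq:constraint} is the probability of the complementary event, namely $1 - \Pr[S_n \leq T]$, which matches the right-hand side of the claim.

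There is no genuine obstacle: the entire argument collapses to verifying the almost-sure set identity above, which in turn is nothing more than non-negativity of the squared transmit amplitudes. The conditional-probability decomposition used in the paper's sketch is simply a reformulation of this monotonicity step, the factor ``$1$'' in their product being precisely the statement that conditioned on $S_n \leq T$, each partial-sum event $\{S_k \leq T\}$ for $k < n$ occurs with probability one. No assumption on the joint distribution of $X^n$ or $Z^m$ is required, which is useful because it allows the result to be applied later to the correlated harvest sequence $Z_i \equiv Z$ induced by block fading.
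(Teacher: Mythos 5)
Your proof is correct and takes essentially the same route as the paper: the paper's chain-rule factorization with the conditional factor equal to $1$ is exactly your observation that non-negativity of the $X_\ell^2$ makes the partial sums monotone, so that $\bigcap_{k=1}^{n}\{S_k\le T\}$ collapses to $\{S_n\le T\}$. (The only discrepancy is typographical: the inequality inside the union in the lemma statement should read $\ge$ rather than $\le$ for it to denote the violation event, consistent with the paper's Appendix A, and your reading is the intended one.)
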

Using Lemma \ref{lemma:energy}, the constraints in (\ref{eq:constraint}) further simplify to $\sum_{\ell=1}^{n}X_\ell^2\leq mZ$ for the case of correlated energy arrivals. We let $\tilde{X}^n=\left(\tilde{X}_1,\cdots,\tilde{X}_n\right)$ be the transmitted sequence. Note that $\tilde{X}^n\neq{X}^n$ when the energy constraints are violated as the EH node lacks sufficient energy to put the intended symbols on the channel.
The signal received at the destination node in slot $\ell$ is given by
$Y_\ell=\tilde{X_\ell} + V_\ell$, where $V^n=\left(V_1,\cdots,V_n\right)$ is an IID sequence modeling the receiver noise such that $V_\ell~{\raise.17ex\hbox{$\scriptstyle{\sim}$}}~\mathcal{N}(0,\sigma^2)$ is a zero-mean Gaussian random variable with variance $\sigma^2$. Note that any deterministic channel attenuation for the EH-RX link can be equivalently tackled by scaling the noise variance. Similarly, we define $Y^n=\left(Y_1,\cdots,Y_n\right)$ as the received sequence.

\subsection{Information Theoretic Preliminaries}\label{secIT}
We now describe the information theoretic preliminaries for the EH-RX link. Let us assume that the EH node transmits a message $W\in\mathcal{W}$ 
over $n$ channel uses. Assuming $W$ is drawn uniformly from $\mathcal{W}\triangleq\{1,2,\cdots,M\}$, we define an $(n,M)$-code having the following features:
It uses a set of encoding functions $\{\mathcal{F}_\ell\}_{\ell=1}^{n}$ for encoding the source message $W\in\mathcal{W}$ given the energy harvesting constraints, i.e., the source node uses $\mathcal{F}_\ell : \mathcal{W}\times\mathbb{R}^\ell_+ \rightarrow \mathbb{R}$ for transmission slot $\ell$, where
$\mathcal{F}_\ell(W,Z^\ell)=\tilde{X}_\ell$ given $Z^\ell=(Z_1,\cdots,Z_\ell)$ such that the energy harvesting constraint in (\ref{eq:constraint}) is satisfied. Specifically, $\tilde{X}_\ell={X}_\ell$ where ${X}_\ell~${\raise.17ex\hbox{$\scriptstyle{\sim}$}}~$\mathcal{N}(0,P_{\rm{t}})$ is drawn IID from a Gaussian codebook when (\ref{eq:constraint}) is satisfied, and $\tilde{X}_\ell=0$ otherwise.
It uses a decoding function $\mathcal{G}:\mathbb{R}^n\rightarrow\mathcal{W}$ that produces the output $\mathcal{G}(Y^n)=\hat{{W}}$, where $Y^n=\left(Y_1,\cdots,Y_n\right)$ is the sequence received at the destination node.

We let $\epsilon\in [0,1)$ denote the target error probability for the noisy communication link.
For $\epsilon\in[0,1)$, an $(n,M,\epsilon)$-code for an AWGN EH channel is defined as the $(n,M)$-code for an AWGN channel such that the average probability of decoding error $\Pr\{\hat{W}\neq W\}$ does not exceed $\epsilon$.
A rate $R$ is \emph{$\epsilon$-achievable} for an AWGN EH channel if there exists a sequence of {$(n,M_n,\epsilon_n)$-codes} such that $\liminf\limits_{n\rightarrow\infty}\frac{1}{n}\log(M_n) \geq R$ and $\limsup\limits_{n\rightarrow\infty} \epsilon_n \leq \epsilon$.
The \emph{$\epsilon$-capacity} $C_\epsilon$ for an AWGN EH channel is defined as $C_\epsilon=\sup\{R:R\,\,\text{is}\, \epsilon\text{-\textit{achievable}}\}$. 	
\subsection{Performance Metrics} 
We now introduce the metrics used for characterizing the performance of the considered \emph{short-packet} wireless-powered communications system. 
Note that the overall performance is marred by two key events. First, due to lack of sufficient energy, the EH node may not be able to transmit the intended codewords during the information transmission phase, possibly causing a decoding error at the receiver. Second, due to  a noisy EH-RX channel, the received signal may not be correctly decoded. For the former, we define a metric called the \emph{energy supply probability}, namely, the probability $\Pr\left[\sum_{i=1}^{n}X_i^2\leq mZ\right]$ that an EH node can support the intended transmission. 
For the latter, we define and characterize the \emph{$\epsilon$-achievable rate} in the finite blocklength regime.

\section{Single Power Beacon}\label{secAnl}
In this section, we characterize the energy supply probability and the achievable rate in the finite blocklength regime for an energy harvester powered by a single power beacon. We also provide closed-form analytical expressions for the optimal transmit power.
\subsection{Energy Supply Probability}
We define the \textit{energy supply probability} $P_\textrm{es}(m,n,a)$ as the probability that an EH node has sufficient energy to transmit the intended codeword, namely, 
\begin{align}\label{eq:def}
P_\textrm{es}(m,n,a)=\Pr\left[\sum_{i=1}^{n}X_i^2\leq mZ\right]
\end{align}
for a harvest blocklength $m$, a transmit blocklength $n$, and a power ratio $a=\frac{P_{\rm{t}}}{P_{\rm{E}}}$. Similarly, we define $P_\textrm{eo}(m,n,a)=1-P_\textrm{es}(m,n,a)$  as the \textit{energy outage probability} at the energy harvesting node. The following proposition characterizes the energy supply probability for the considered system.

\begin{proposition}\normalfont
Assuming the intended transmit symbols $\{X_i\}_{i=1}^{n}$ are drawn IID from $\mathcal{N}(0,P_t)$, the energy sequence $\{Z_i\}_{i=1}^{m}=Z$ is fully correlated, and $Z$ follows an exponential law with mean $P_{\rm{E}}$, the energy supply probability is given by
\begin{align}\label{eq:p_ec single PB}
	P_\textrm{es}(m,n,a)=\frac{1}{\left(1+\frac{2a}{m}\right)^{\frac{n}{2}}}
\end{align}
for $m>2a$ where $a=\frac{P_\textrm{t}}{P_{\rm{E}}}$, while $m$ and $n$ denote the blocklengths for the harvest and the transmit phase.
\end{proposition}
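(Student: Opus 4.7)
The plan is to evaluate $\Pr\bigl[\sum_{i=1}^n X_i^2 \leq mZ\bigr]$ by conditioning on the intended transmit sequence (rather than on $Z$) and exploiting the fact that $Z$ is exponential, so that the joint expression collapses into a product of $n$ identical one-dimensional Gaussian Laplace transforms.

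First I would rewrite the event as $Z \geq \tfrac{1}{m}\sum_{i=1}^n X_i^2$ and condition on $X^n=(X_1,\ldots,X_n)$. Because $Z$ is independent of the $X_i$'s and exponentially distributed with mean $P_{\rm{E}}$, its complementary CDF is $\Pr[Z \geq t]=e^{-t/P_{\rm{E}}}$ for $t\geq 0$, giving
\begin{align*}
P_\textrm{es}(m,n,a) \;=\; \mathbb{E}\!\left[\exp\!\left(-\tfrac{1}{mP_{\rm{E}}}\sum_{i=1}^n X_i^2\right)\right].
\end{align*}
Next, the mutual independence of the $X_i$ factors this into $\bigl(\mathbb{E}[e^{-X_1^2/(mP_{\rm{E}})}]\bigr)^n$, reducing the whole problem to a single Gaussian integral. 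A routine completion-of-the-square inside the Gaussian exponent gives the standard identity $\mathbb{E}[e^{-\alpha X_1^2}] = (1+2\alpha P_{\rm{t}})^{-1/2}$ for any $\alpha>0$ and $X_1\sim\mathcal{N}(0,P_{\rm{t}})$. Substituting $\alpha = 1/(mP_{\rm{E}})$ and using $a = P_{\rm{t}}/P_{\rm{E}}$ produces $(1+2a/m)^{-1/2}$ per symbol; raising to the $n$th power delivers the claimed formula.

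There is no real technical obstacle here beyond bookkeeping; the whole argument hinges on the observation in the first step that conditioning on the $X_i$'s (and \emph{not} on $Z$) is what makes the expectation separable, because the chi-square-type coupling between the symbols is absorbed into the exponential's rate parameter. The stated constraint $m>2a$ is in fact not needed for the derivation I sketch, since $(1+2a/m)^{-n/2}$ is well-defined for all positive $m$ and $a$; I would flag this and note that the condition likely reflects an operating-regime assumption used downstream (e.g.\ requiring the average harvested energy $mP_{\rm{E}}$ to comfortably dominate the per-symbol transmit energy $2P_{\rm{t}}$, so that $P_\textrm{es}$ stays bounded away from zero as $n$ grows), rather than a genuine analytic restriction on the identity itself.
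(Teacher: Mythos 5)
Your proof is correct and follows essentially the same route as the paper's: both condition on the transmit symbols and use the exponential tail of $Z$ to reduce the probability to $\mathbb{E}\left[e^{-\frac{1}{mP_{\rm{E}}}\sum_{i=1}^{n}X_i^2}\right]$, the only cosmetic difference being that the paper packages $\frac{1}{P_{\rm{t}}}\sum_{i=1}^{n}X_i^2$ as a chi-squared variable with $n$ degrees of freedom and cites its moment generating function, whereas you factor the expectation into $n$ identical one-dimensional Gaussian integrals, which is exactly how that MGF is derived. Your side remark is also accurate: the transform is evaluated at a nonnegative argument of the Laplace transform (equivalently a negative MGF argument), so it converges for all $m,a>0$ and the condition $m>2a$ is not an analytic requirement of the identity itself.
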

\begin{proof}
	The proof follows by leveraging the statistical properties of the random variables. Consider
	\begin{align}\label{eq:prop 1}
	P_\textrm{es}\left(m,n,a\right)&=\Pr\left[\sum_{i=1}^{n}X_i^2\leq mZ\right]
	\overset{(a)}{=}\Pr\left[W \leq \frac{mZ}{P_\textrm{t}}\right]\nonumber\\
	&\overset{(b)}{=}\mathbb{E}_{W}\left[e^{-\frac{P_\textrm{t}}{P_{\rm{E}}m}W}\right]
	\overset{}{=}\frac{1}{\left(1+\frac{2a}{m}\right)^\frac{n}{2}}
	\end{align}
	where (${a}$) follows from the substitution $W=\frac{1}{P_\textrm{t}}\sum_{i=1}^{n}X_i^2$ where $W$ is a Chi-squared random variable with $n$ degrees of freedom, and (${b}$) is obtained by conditioning on the random variable $W$, and by further noting that $Z$ is exponentially distributed with mean $P_{\rm{E}}$. Assuming $m>2a$, the last equation follows from the definition of the moment generating function of a Chi-squared random variable.
\end{proof}
While Proposition 1 is valid for $m>2a$, we note that this is the case of practical interest since it is desirable to operate at $a<1$, as evident from Section \ref{secSim}. Further, the expression in (\ref{eq:p_ec single PB}) makes intuitive sense as the energy outages would increase with the transmit blocklength $n$ for a given $m$, and decrease with the harvest blocklength $m$ for a given $n$. 
Let us fix $P_{\rm{t}}$ and $P_{\rm{E}}$. For a given $m$, we may improve the reliability of the EH-RX communication link by increasing the blocklength $n$, albeit at the expense of the energy supply probability. 
With a smaller transmit power $P_{\rm{t}}$, the energy harvester is less likely to run out of energy during an ongoing transmission. Therefore, when $m+n$ is fixed, we may reduce $P_{\rm{t}}$ to meet the energy supply constraint, but this would reduce the channel signal-to-noise ratio (SNR). 
This underlying tension between the energy availability and the communication reliability will be highlighted throughout the rest of this paper.
The following discussion relates the transmit power to the harvest and transmit blocklengths, illustrating some of the key tradeoffs.  
\begin{remark}\normalfont
The energy supply probability is more sensitive to the length of the transmit phase compared to that of the harvest phase. 
This observation also manifests itself in terms of the energy requirements at the transmitter. For instance, to maintain an energy supply probability $\rho$, it follows from (\ref{eq:p_ec single PB}) that the power ratio satisfies $a\geq\frac{m}{2}\left(\rho^{-\frac{2}{n}}-1\right)$. Note that the power ratio varies only linearly with the harvest blocklength $m$, but superlinearly with the transmit blocklength $n$. This further implies that for a fixed $n$, doubling the harvest blocklength relaxes the transmit power budget by the same amount. That is, the energy harvester can double its transmit power $P_{\rm{t}}$ (and therefore the channel SNR) without violating the required energy constraints. In contrast, reducing the transmit blocklength for a given $m$ brings about an exponential increase in the transmit power budget at the energy harvester. 
\end{remark}
The following corollary treats the scaling behavior of the energy supply probability as the  blocklength becomes large.
\begin{corollary}\normalfont
When the harvest blocklength $m$ scales in proportion to the transmit blocklength $n$ such that $m=cn$ for some constant $c>0$, the energy supply probability $P_\textrm{es}(m,n,a)$ converges to a limit as $n$ becomes asymptotically large. In other words, $\lim\limits_{n\rightarrow\infty}P_\textrm{es}(m,n,a)=e^{-\frac{a}{c}}<1$ such that the limit only depends on the power ratio $a>0$ and the proportionality constant $c>0$. Further, under proportional blocklength scaling, this limit also serves as an upper bound on the energy supply probability for finite blocklengths, i.e., $P_\textrm{es}(m,n,a)\leq e^{-\frac{a}{c}}<1$.
\end{corollary}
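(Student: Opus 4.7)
The plan is straightforward and builds directly on Proposition~1. Setting $m = cn$ in (\ref{eq:p_ec single PB}) yields
\[P_\textrm{es}(cn, n, a) = \left(1 + \frac{2a}{cn}\right)^{-n/2},\]
which is well defined once $cn > 2a$, i.e., for all sufficiently large $n$.

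For the limit, I would rewrite $(1 + 2a/(cn))^{n/2} = \bigl[(1 + 2a/(cn))^n\bigr]^{1/2}$ and invoke the classical identity $(1+x/k)^k \to e^x$ with $x = 2a/c$ and $k = n$, obtaining $(1 + 2a/(cn))^{n/2} \to e^{a/c}$ and hence $P_\textrm{es}(cn, n, a) \to e^{-a/c}$. The strict inequality $e^{-a/c} < 1$ follows immediately from $a,c > 0$.

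For the bound portion, the natural tool is the elementary inequality $\ln(1+u) \leq u$ for $u > -1$. Applying it with $u = 2a/(cn) > 0$ gives
\[\ln P_\textrm{es}(cn, n, a) \;=\; -\frac{n}{2}\ln\!\left(1 + \frac{2a}{cn}\right) \;\geq\; -\frac{n}{2}\cdot\frac{2a}{cn} \;=\; -\frac{a}{c},\]
so $P_\textrm{es}(cn, n, a) \geq e^{-a/c}$ for every finite $n$. Equivalently, the monotonicity of $k\mapsto (1+x/k)^k$ (proved by showing $k\mapsto k\ln(1+x/k)$ is increasing via $\ln(1+u) > u/(1+u)$) says that the denominator $(1+2a/(cn))^{n/2}$ increases to $e^{a/c}$, so $P_\textrm{es}(cn,n,a)$ decreases to $e^{-a/c}$ from above.

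The obstacle here is really a sign discrepancy: the natural argument yields $P_\textrm{es}(cn,n,a) \geq e^{-a/c}$, whereas the corollary asserts the opposite inequality. The concavity of $\ln(1+x)$ pins this down, since every standard comparison — Jensen's inequality, the Taylor remainder of $\ln(1+x)$, or the monotone convergence of $(1+x/k)^k$ — produces $(1+2a/(cn))^{n/2} \leq e^{a/c}$, which after inversion is a lower bound on $P_\textrm{es}$, not an upper one. I would therefore read the corollary's ``upper bound'' as a typo for ``lower bound'' and present the proof accordingly: derive the limit as above, then record the monotonic lower bound $P_\textrm{es}(cn,n,a) \geq e^{-a/c}$, observing that the claim $e^{-a/c} < 1$ remains valid and gives the intended qualitative message that proportional scaling caps the energy supply probability strictly below~$1$ in the limit.
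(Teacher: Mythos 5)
Your computation of the limit is correct and is essentially the only route available: substitute $m=cn$ into Proposition~1 and invoke $(1+x/k)^k\to e^x$; the paper states this corollary without proof, so there is nothing further to compare against on that score. More importantly, your diagnosis of the final inequality is right: the printed claim $P_\textrm{es}(cn,n,a)\le e^{-a/c}$ is false. Since $\ln(1+u)\le u$ for $u>0$, one has $\frac{n}{2}\ln\left(1+\frac{2a}{cn}\right)\le \frac{a}{c}$, hence $\left(1+\frac{2a}{cn}\right)^{n/2}\le e^{a/c}$ and $P_\textrm{es}(cn,n,a)\ge e^{-a/c}$; indeed, as you note, $\left(1+\frac{2a/c}{n}\right)^{n/2}$ is increasing in $n$, so $P_\textrm{es}$ decreases monotonically to the limit from above. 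A numerical spot check confirms this: with $a=c=1$ and $m=n=4$ (so $m>2a$ holds), $P_\textrm{es}=(1+1/2)^{-2}=4/9\approx 0.444 > e^{-1}\approx 0.368$. The limit is therefore a \emph{lower} bound on the finite-blocklength energy supply probability, or equivalently $1-e^{-a/c}$ is an upper bound on the energy outage probability $P_\textrm{eo}$, which is presumably what was intended. The paper's qualitative conclusion --- that linear scaling cannot drive $P_\textrm{es}$ to one --- survives, but it follows from the limit statement (together with the trivial bound $P_\textrm{es}=(1+2a/m)^{-n/2}<1$ for $a>0$), not from the reversed inequality. Your proof, with the bound read as a lower bound, is complete.
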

The previous corollary also shows that energy outage is a fundamental bottleneck regardless of the blocklength, assuming at best linear scaling.

\subsection{Achievable Rate}
The following result characterizes the $\epsilon$-achievable rate of the considered wireless-powered communication system in the finite blocklength regime.

\begin{theorem}\normalfont
Given a target error probability $\epsilon\in[0,1)$ for the noisy channel, the $\epsilon$-achievable rate $R_{\rm{EH}}^{}\left(\epsilon,m,n,a,\gamma\right)$ of the considered system with harvest blocklength $m$, transmit blocklength $n$, power ratio $a$ (where $2a<m$), and the SNR $\gamma=\frac{P_{\rm{t}}}{\sigma^2}$ is given by 
\begin{align}\label{eq:main}
	R_{\rm{EH}}^{}\left(\epsilon,m,n,a,\gamma\right)=\frac{\frac{n\log(1+\gamma)}{2}
		-\sqrt{\frac{2+\epsilon}{\epsilon}\frac{\gamma}{\gamma+1}n}
			-{(n)}^{\frac{1}{4}}-1}
			{n+m}
\end{align}
for all tuples $(m,n)$ satisfying
\begin{align}\label{eq:mainc1}
m\geq\frac{2a}{\exp\left(\frac{2\ln(1+0.5\epsilon)}{\left(\ln\left[\frac{2+\epsilon}
		{\epsilon^2}\right]\right)^4}\right)-1}
\end{align}
and
\begin{align}\label{eq:mainc2}
n\leq 2 \frac{\ln(1+0.5\epsilon)}{\ln\left(1+\frac{2a}{m}\right)}.
\end{align}	
\end{theorem}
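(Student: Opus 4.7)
My plan is to prove (\ref{eq:main}) by combining a Feinstein / Polyanskiy--Poor--Verd\'u dependence-testing achievability bound for the conditional AWGN sub-channel with the closed-form energy supply probability already established in Proposition~1. The essential idea is to decompose the overall decoding error into (i) an energy outage event $E_\text{out}=\{\sum_{\ell=1}^{n}X_\ell^2>mZ\}$, and (ii) a channel decoding error given no outage. Conditional on $E_\text{out}^{\,c}$ the intended Gaussian codeword is actually placed on the channel, so the transmitter--receiver link reduces to a standard AWGN channel with Gaussian input $X\sim\mathcal{N}(0,P_\text{t})$ and SNR $\gamma$. I will then apportion the target error budget $\epsilon$ across these two pieces (plus a Feinstein slack) and tune the parameters so that the three pieces sum to exactly $\epsilon$.

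First I would use Proposition~1 together with condition (\ref{eq:mainc2}) to show $(1+2a/m)^{n/2}\leq 1+\epsilon/2$, hence $\Pr[E_\text{out}]\leq \epsilon/(2+\epsilon)$. Next I would invoke Feinstein's lemma on the conditional AWGN channel: there exists an $(n,M)$ code whose conditional decoding error is at most $\Pr[i(X^n;Y^n)\leq \log M+\beta]+e^{-\beta}$ for any $\beta>0$, where $i(X^n;Y^n)=\sum_{\ell=1}^{n}i(X_\ell;Y_\ell)$ is an IID sum with mean $nC=(n/2)\ln(1+\gamma)$ and per-symbol variance equal to the Gaussian dispersion $V=\gamma(\gamma+2)/(2(\gamma+1)^2)$. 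I would then apply Chebyshev's inequality to obtain $\Pr[i(X^n;Y^n)\leq nC-t]\leq nV/t^2$, and use the elementary bound $V\leq \gamma/(\gamma+1)$ (which follows since $\frac{\gamma+2}{2(\gamma+1)}\leq 1$ for $\gamma\geq 0$).

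With those tools in hand, the budget allocation is: set the Chebyshev bound $nV/t^2=\epsilon/(2+\epsilon)$, giving $t=\sqrt{nV(2+\epsilon)/\epsilon}\leq \sqrt{(2+\epsilon)\gamma n/(\epsilon(\gamma+1))}$, which explains the square-root term in the numerator of (\ref{eq:main}); and set the Feinstein slack parameter $\beta=n^{1/4}$, so $e^{-\beta}=e^{-n^{1/4}}$. Here condition (\ref{eq:mainc1}), when combined with the maximal $n$ from (\ref{eq:mainc2}), ensures $n\geq \bigl[\ln((2+\epsilon)/\epsilon^2)\bigr]^4$ and hence $e^{-n^{1/4}}\leq \epsilon^2/(2+\epsilon)$. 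Choosing $\log M+\beta=nC-t$ and absorbing the one-bit loss from rounding $M$ to an integer into the trailing $-1$ yields the numerator of (\ref{eq:main}); the total error is then $\Pr[E_\text{out}]+\Pr[\text{Chebyshev tail}]+e^{-\beta}\leq \frac{\epsilon}{2+\epsilon}+\frac{\epsilon}{2+\epsilon}+\frac{\epsilon^2}{2+\epsilon}=\epsilon$. Normalizing $\log M$ by the total blocklength $n+m$ (not just $n$, since the $m$ harvest slots are charged to the rate) delivers (\ref{eq:main}).

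The main obstacle is the error-budget bookkeeping: condition (\ref{eq:mainc2}) is reverse-engineered to make $\Pr[E_\text{out}]\leq \epsilon/(2+\epsilon)$, condition (\ref{eq:mainc1}) is reverse-engineered to make $e^{-n^{1/4}}\leq \epsilon^2/(2+\epsilon)$, and the Chebyshev slack is tuned so that the three pieces telescope to exactly $\epsilon$. This coordination explains the otherwise cryptic constants, in particular the $(2+\epsilon)/\epsilon$ factor inside the square root and the quartic log appearing inside the exponential in (\ref{eq:mainc1}). The remaining ingredients---the Gaussian dispersion formula, the bound $V\leq \gamma/(\gamma+1)$, Chebyshev, Feinstein, and integer rounding---are all routine; the novelty is the joint tuning with the save-then-transmit energy constraint via Proposition~1.
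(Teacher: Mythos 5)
Your proposal follows essentially the same route as the paper's proof: the identical three-way error budget $\tfrac{\epsilon}{2+\epsilon}+\tfrac{\epsilon}{2+\epsilon}+\tfrac{\epsilon^2}{2+\epsilon}=\epsilon$ split between the energy outage (bounded via Lemma 1, Proposition 1 and the constraint on $n$), a Chebyshev bound on the accumulated information density, and a threshold-decoder (Feinstein-type) slack $e^{-n^{1/4}}\leq\tfrac{\epsilon^2}{2+\epsilon}$ enforced by $n\geq\bigl[\ln\bigl(\tfrac{2+\epsilon}{\epsilon^2}\bigr)\bigr]^4$, followed by the one-bit rounding loss and normalization by $n+m$. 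The only cosmetic discrepancy is that the per-symbol variance of the information density under IID Gaussian inputs is exactly $\tfrac{\gamma}{\gamma+1}$ (in nats), not the sphere-input dispersion $\tfrac{\gamma(\gamma+2)}{2(\gamma+1)^2}$ you quote, but since you upper-bound the latter by the former your Chebyshev step and final rate expression coincide with the paper's.
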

\begin{proof}
	See Appendix A.
\end{proof}
For a given target error probability $\epsilon$, 
a harvest blocklength $m$ can support a transmit blocklength only as large as in (\ref{eq:mainc2}). Moreover, a sufficiently large $m$, as given in (\ref{eq:mainc1}), is required for a sufficiently large $n$ to meet the target error probability $\epsilon$. The constraints in $(\ref{eq:mainc1})$ and $(\ref{eq:mainc2})$ can be equivalently written as 
\begin{align}\label{eq:mainc11}
n\geq\left[\log\left(\frac{2+\epsilon}{\epsilon^2}\right)\right]^4
\end{align}
and
\begin{align}\label{eq:mainc22}
m\geq\frac{2a}{(1+0.5\epsilon)^{\frac{2}{n}}-1}
\end{align}
A sufficiently long transmit codeword is required to meet the reliability requirements of the communication link. Similarly, a sufficiently long harvest blocklength is required to replenish the energy supply. In latency-constrained systems where the total blocklength is fixed, this interplay between the transmit and harvest blocklength results in a trade-off between the energy supply probability and the communication reliability. 
For the rest of the analysis, we assume that minimum possible blocklengths are selected to satisfy the constraints in (\ref{eq:mainc11}) and (\ref{eq:mainc22}), i.e., we set $n=\Big\lceil\left(\log\left(\frac{2+\epsilon}{\epsilon^2}\right)\right)^4\Big\rceil_{\rm{ev}}$ 
and $m=\Big\lceil\frac{2a}{(1+0.5\epsilon)^{\frac{2}{n}}-1}\Big\rceil$, where $\lceil x\rceil_{\rm{}}$ (or $\lceil x\rceil_{\rm{ev}}$) returns the smallest integer (or even integer) not smaller than $x$. We call it the \textit{minimum latency approach}. 
The following remark illustrates the scaling behavior of the harvest and transmit blocklengths.

\begin{remark}\normalfont\label{rem:scaling}
Under the minimum latency approach, the harvest blocklength scales almost linearly with the transmit blocklength according to the law $m\approx\frac{2a}{\epsilon}n$. 
This follows from the constraint in (\ref{eq:mainc2}) where $m=\frac{2a}{\left[1+0.5\epsilon\right]^{\frac{2}{n}-1}}\approx\frac{2a}{\epsilon}n$ when $\epsilon$ is small. 
Further, the scaling rate $\frac{m}{n}$ is directly proportional to the power ratio $a$ and inversely proportional to the error $\epsilon$. For example, fix $n$ and $a$. A $k$-fold reduction in $\epsilon$ requires a $k$-fold increase in the harvest blocklength to attain the corresponding $\epsilon$-achievable rate.   
This increase in reliability, however, comes at the expense of a reduced rate and an increased latency since the harvesting overhead is $1+\frac{2a}{\epsilon}$ and the total blocklength grows as $(1+\frac{2a}{\epsilon})n$. This further suggests that we may overcome the rate (and latency) loss by a $k$-fold increase in $a$, i.e., by increasing $P_{\rm{E}}$ for a fixed $P_{\rm{t}}$. This could be achieved by increasing the PB transmit power and/or improving the rectifier efficiency. 
\end{remark}

The following proposition provides an analytical expression for the achievable rate in the asymptotic blocklength regime. We note that the asymptotic results provide a useful analytical handle for the non-asymptotic case as well. 
\begin{proposition}\normalfont\label{proposition:asym rate}
Let $R_{\textrm{EH}}^{\infty}(\epsilon,a,\gamma)$ denote the asymptotic achievable rate as the transmit blocklength $n\rightarrow\infty$ (and consequently the harvest blocklength $m\rightarrow\infty$ under the minimum latency approach), i.e., $R^{\infty}_{\textrm{EH}}(\epsilon,a,\gamma)=\lim\limits_{n\rightarrow\infty}R_\textrm{EH}(\epsilon,m,n,a,\gamma)$. It is given by
\begin{align} \label{eq:prop2}
R^\infty_{\textrm{EH}}(\epsilon,a,\gamma)&= L(a,\epsilon) C^\infty_{\textrm{AWGN}}(\gamma)
\end{align}
where \begin{equation}C^\infty_{\textrm{AWGN}}(\gamma)=\frac{1}{2}\log(1+\gamma),\quad\gamma\geq 0 
\end{equation}
denotes the capacity of an AWGN channel without the energy harvesting constraints, whereas
\begin{align}\label{eq:prpo2a}
L(a,\epsilon)=\frac{1}{1+\frac{a}{\log\left(1+0.5\epsilon\right)}},\quad a\geq 0,\,\epsilon\in[0,1)
\end{align}
where $L(a,\epsilon)\in[0,1]$ such that $1-L(a,\epsilon)$ gives the (fractional) loss in capacity due to energy harvesting constraints.
\end{proposition}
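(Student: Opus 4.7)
The plan is to start from the finite blocklength rate expression in Theorem 1, substitute the minimum latency choice of $m$ as a function of $n$, and evaluate the limit $n\to\infty$ by isolating the dominant term and showing the vanishing of the rest.

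First I would write
\begin{equation}
R_{\rm EH}(\epsilon,m,n,a,\gamma)=\frac{n}{n+m}\cdot\frac{\log(1+\gamma)}{2}-\frac{1}{n+m}\left(\sqrt{\tfrac{2+\epsilon}{\epsilon}\tfrac{\gamma}{\gamma+1}n}+n^{1/4}+1\right),
\end{equation}
so that the problem reduces to (i) identifying the limit of the ratio $\frac{n}{n+m}$, and (ii) showing that the residual terms vanish. The minimum latency approach forces $m=\lceil 2a/((1+0.5\epsilon)^{2/n}-1)\rceil$, so in the second step I would only need to control $m/n$ as $n\to\infty$.

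The central computation is the asymptotic behavior of $m/n$. Writing $(1+0.5\epsilon)^{2/n}=\exp\bigl(\tfrac{2}{n}\log(1+0.5\epsilon)\bigr)$ and Taylor-expanding the exponential about $0$ (valid for large $n$) gives
\begin{equation}
(1+0.5\epsilon)^{2/n}-1=\frac{2\log(1+0.5\epsilon)}{n}+O(n^{-2}),
\end{equation}
so $m=\frac{an}{\log(1+0.5\epsilon)}+O(1)$ after absorbing the ceiling. Hence
\begin{equation}
\lim_{n\to\infty}\frac{n}{n+m}=\frac{1}{1+\frac{a}{\log(1+0.5\epsilon)}}=L(a,\epsilon),
\end{equation}
which immediately delivers the first (dominant) contribution $L(a,\epsilon)\cdot\tfrac{1}{2}\log(1+\gamma)=L(a,\epsilon)\,C^{\infty}_{\rm AWGN}(\gamma)$.

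For the residual, since $m=\Theta(n)$, we have $n+m=\Theta(n)$, so the square-root term scales as $n^{1/2}/n=n^{-1/2}\to 0$, the $n^{1/4}$ term as $n^{-3/4}\to 0$, and the constant as $n^{-1}\to 0$. Combining the two steps yields $R_{\rm EH}^{\infty}(\epsilon,a,\gamma)=L(a,\epsilon)\,C_{\rm AWGN}^{\infty}(\gamma)$, as claimed. The only mildly subtle step is handling the integer ceiling in the definition of $m$; I would bound $m$ between $\tfrac{2a}{(1+0.5\epsilon)^{2/n}-1}$ and one more than that quantity, so the ceiling contributes only an $O(1/n)$ correction to $m/n$ and does not affect the limit. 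I expect this Taylor-expansion/ceiling bookkeeping to be the only delicate point; the rest is straightforward asymptotic analysis.
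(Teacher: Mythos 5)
Your proposal is correct and takes essentially the same route as the paper: substitute the minimum-latency choice $m=\frac{2a}{(1+0.5\epsilon)^{2/n}-1}$, note that the backoff terms vanish after division by $n+m=\Theta(n)$, and evaluate the limit of $m/n$ via $\lim_{n\to\infty}n\bigl((1+x)^{2/n}-1\bigr)=2\log(1+x)$, which is exactly what your Taylor expansion establishes. Your explicit bookkeeping for the ceiling is a minor refinement the paper omits but changes nothing.
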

\begin{proof}
Using (\ref{eq:main}), $R_{\textrm{EH}}^{\infty}\left(\epsilon,a,\gamma\right)$ can be expressed as  
\allowdisplaybreaks
\begin{align}
	R_{\textrm{EH}}^{\infty}\left(\epsilon,a,\gamma\right)&=\lim\limits_{n\rightarrow \infty}\frac{\frac{n\log(1+\gamma)}{2}
		-\sqrt{\frac{2+\epsilon}{\epsilon}\frac{\gamma}{\gamma+1}n}
			-{(n)}^{\frac{1}{4}}-1}
			{n+m}\\
	&\overset{(a)}{=}\lim\limits_{n\rightarrow \infty}\frac{1}{1+\frac{m}{n}}\frac{\log(1+\gamma)}{2}\\
	&\overset{(b)}{=}\lim\limits_{n\rightarrow \infty}\frac{1}{1+\frac{2a}{n[1+0.5\epsilon]^{\frac{2}{n}}-1}}\frac{\log(1+\gamma)}{2}\\
	&\overset{(c)}{=}\underbrace{\frac{1}{1+\frac{a}{\log\left(1+0.5\epsilon\right)}}}_{L(a,\epsilon)}
	\underbrace{\frac{\log(1+\gamma)}{2}}_{C_{\textrm{AWGN}}^\infty(\gamma)}
\end{align}
where (${a}$) follows since the higher order terms in (\ref{eq:main}) vanish as $n\rightarrow\infty$. Note that for a given $\epsilon$ and $a$, $m$ and $n$ should satisfy (\ref{eq:mainc1}) and (\ref{eq:mainc2}). Equality $(b)$ is obtained by substituting $m=\frac{2a}{\left[1+0.5\epsilon\right]^{\frac{2}{n}}-1}$ from (\ref{eq:mainc2}), and by further assuming that $n\geq\left(\log\left(\frac{2+\epsilon}{\epsilon^2}\right)\right)^4$. Finally, (c) follows by noting that $\lim\limits_{n\rightarrow\infty}n\left(\left(1+x\right)^{\frac{2}{n}}-1\right)=2\log(1+x)$.
\end{proof}
\begin{remark}\normalfont
Proposition 2 reveals a fundamental communications limit of the considered wireless-powered system. To guarantee an $\epsilon$-reliable communication over $n$ channel uses, the node first needs to accumulate sufficient energy during the initial harvesting phase. A sufficiently large $m$ helps improve the energy availability at the transmitter. This harvesting overhead, however, causes a rate loss (versus a non-energy harvesting system) as the first $m$ channel uses are reserved for harvesting. Moreover, as the transmit blocklength $n$ grows, so does the length of the initial harvesting phase $m$, resulting in an inescapable performance limit on the communication system. This limit depends on i) the power ratio $a$, and ii) the required reliability $\epsilon$, and is captured by the prelog term $L(a,\epsilon)$ in (\ref{eq:prpo2a}) for a given $\gamma$. Moreover, this behavior is more visible for latency-constrained systems where the total blocklength is fixed. 
\end{remark}

\begin{remark}\normalfont
In the asymptotic blocklength regime, the harvest blocklength should be scaled proportionally to the transmit blocklength with a scaling rate $\frac{a}{\log(1+0.5\epsilon)}$ to attain the corresponding asymptotic $\epsilon$-achievable rate. Note that this scaling rate approximately equals $\frac{2a}{\epsilon}$ (when $\epsilon$ is small), which is similar to the non-asymptotic scaling rate discussed in Remark \ref{rem:scaling}.
\end{remark}
\begin{remark}\normalfont
We note that the asymptotic achievable rate vanishes as $\epsilon\rightarrow 0$. This is because the wireless energy transfer link may fade completely, resulting in a transmission outage for the information transfer link. 
\end{remark}
\begin{corollary}\normalfont
As the power ratio $a\rightarrow0$ in (\ref{eq:prop2}), the asymptotic achievable rate converges to the capacity of a non-energy harvesting AWGN channel, i.e., $\lim\limits_{a\rightarrow0}R^\infty_{\textrm{EH}}(\epsilon,a,\gamma)=C^\infty_{\textrm{AWGN}}(\gamma)$. 
\end{corollary}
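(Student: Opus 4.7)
The plan is to invoke Proposition \ref{proposition:asym rate} directly and then evaluate the limit of the pre-log factor $L(a,\epsilon)$ as $a\to 0$ with $\epsilon$ and $\gamma$ held fixed. Since (\ref{eq:prop2}) factors the asymptotic rate as the product $R^\infty_{\textrm{EH}}(\epsilon,a,\gamma)=L(a,\epsilon)\,C^\infty_{\textrm{AWGN}}(\gamma)$, and the AWGN term does not depend on $a$, the whole computation reduces to showing $\lim_{a\to 0}L(a,\epsilon)=1$.

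The explicit form $L(a,\epsilon)=1/\bigl(1+a/\log(1+0.5\epsilon)\bigr)$ from (\ref{eq:prpo2a}) makes this transparent. First I would fix $\epsilon\in(0,1)$ so that $\log(1+0.5\epsilon)$ is a strictly positive constant, ensuring the ratio $a/\log(1+0.5\epsilon)$ is well-defined and vanishes as $a\to 0$. Since $x\mapsto 1/(1+x)$ is continuous at $x=0$, it then follows that $L(a,\epsilon)\to 1$. Multiplying by the $a$-independent term $C^\infty_{\textrm{AWGN}}(\gamma)=\tfrac{1}{2}\log(1+\gamma)$ yields the claimed limit.

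There is no real obstacle here: the result is essentially a one-line continuity argument on top of Proposition \ref{proposition:asym rate}. The only point worth flagging is the degenerate case $\epsilon=0$, which is excluded by the standing assumption $\epsilon\in[0,1)$ used in (\ref{eq:prpo2a}) but for which $L$ would need a separate interpretation; for any $\epsilon>0$ the argument above applies verbatim.

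It is worth noting the underlying physical picture, which corroborates the limit. Since $a=P_{\rm{t}}/P_{\rm{E}}$, the regime $a\to 0$ corresponds to the harvested power dwarfing the per-symbol transmit energy. Under the minimum-latency blocklength allocation of Remark \ref{rem:scaling} one has $m\approx(2a/\epsilon)\,n$, so as $a\to 0$ the harvesting overhead $m/n$ shrinks to zero and the save-then-transmit protocol pays no rate penalty, recovering the unconstrained AWGN capacity $C^\infty_{\textrm{AWGN}}(\gamma)$.
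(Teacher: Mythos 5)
Your proof is correct and matches the paper's (implicit) argument: the corollary is stated without proof precisely because it is the one-line continuity observation that $L(a,\epsilon)=1/\bigl(1+a/\log(1+0.5\epsilon)\bigr)\to 1$ as $a\to 0$ for fixed $\epsilon>0$, with $C^\infty_{\textrm{AWGN}}(\gamma)$ independent of $a$. Your caveat about $\epsilon=0$ is a fair observation and is consistent with the paper's own remark that the asymptotic rate vanishes as $\epsilon\to 0$.
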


\begin{remark}\normalfont
With $P_{\rm{t}}$ fixed, decreasing $a$ (by increasing $P_{\rm{E}}$) improves the energy availability at the EH node during the information transmission phase. As $a$ is decreased, a smaller harvest blocklength is required to support a certain transmit blocklength and $\epsilon$. As a result, in the limit $a\rightarrow 0$, the harvesting overhead vanishes as the transmit blocklength goes to infinity. Therefore, the system effectively reduces to a traditionally-powered communication system.     
\end{remark}

\begin{corollary}\normalfont
In the high-reliability regime (when $\epsilon\in[0,1)$ is small), the asymptotic achievable rate $R^\infty_{\textrm{EH}}(\epsilon,a,\gamma)$ in (\ref{eq:prop2}) can be approximated as 
\begin{align}
R^\infty_{\textrm{EH}}(\epsilon,a,\gamma)&\approx \frac{1}{1+\frac{2a}{\epsilon}}C^{\infty}_{\textrm{AWGN}}(\gamma)=\frac{1}{1+\frac{2P_t}{P_E\epsilon}}C^{\infty}_{\textrm{AWGN}}(\gamma),
\end{align}
which follows since $\log(1+x)\approx x$ when $x$ is small.
\end{corollary}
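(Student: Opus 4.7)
The plan is to obtain the stated approximation directly from Proposition~\ref{proposition:asym rate} by applying a first-order Taylor expansion in the small parameter $\epsilon$. The starting point is the exact expression
\begin{align}
R^\infty_{\textrm{EH}}(\epsilon,a,\gamma) = L(a,\epsilon)\,C^\infty_{\textrm{AWGN}}(\gamma) = \frac{1}{1+\dfrac{a}{\log(1+0.5\epsilon)}}\,C^\infty_{\textrm{AWGN}}(\gamma),
\end{align}
so the only object that needs to be approximated is the prelog factor $L(a,\epsilon)$; the capacity factor $C^\infty_{\textrm{AWGN}}(\gamma)=\tfrac{1}{2}\log(1+\gamma)$ does not depend on $\epsilon$ and passes through unchanged.

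First I would isolate the $\epsilon$-dependent quantity inside $L(a,\epsilon)$, namely $\log(1+0.5\epsilon)$, and invoke the standard expansion $\log(1+x) = x + O(x^2)$ as $x\to 0$. Setting $x = 0.5\epsilon$, which lies in $[0,0.5)$ and is small in the high-reliability regime, yields $\log(1+0.5\epsilon) \approx \epsilon/2$. Substituting this into the denominator of $L(a,\epsilon)$ gives
\begin{align}
L(a,\epsilon) \;\approx\; \frac{1}{1+\dfrac{a}{\epsilon/2}} \;=\; \frac{1}{1+\dfrac{2a}{\epsilon}},
\end{align}
and multiplying by $C^\infty_{\textrm{AWGN}}(\gamma)$ yields the first equality in the claim.

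The second equality is just the definition $a=P_{\rm{t}}/P_{\rm{E}}$ introduced earlier in the paper, which I would simply plug in to rewrite $\frac{2a}{\epsilon}$ as $\frac{2P_{\rm{t}}}{P_{\rm{E}}\epsilon}$. There is essentially no analytical obstacle here; the entire argument is a one-line Taylor expansion, and the only thing worth remarking on is that the approximation is valid precisely in the regime the corollary targets, namely when $\epsilon$ is small enough that the $O(\epsilon^2)$ remainder in the logarithm expansion is negligible compared with $\epsilon$. Since no claim of uniformity or an explicit error bound is being made (the result is stated with $\approx$), no further work is required beyond noting this asymptotic regime.
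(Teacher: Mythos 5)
Your proposal is correct and follows exactly the same route as the paper: the corollary is justified there by the single observation that $\log(1+x)\approx x$ for small $x$, applied to the $\log(1+0.5\epsilon)$ term in the prelog factor $L(a,\epsilon)$ of Proposition~\ref{proposition:asym rate}, followed by the substitution $a=P_{\rm{t}}/P_{\rm{E}}$. Nothing further is needed.
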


\begin{remark}\normalfont
The previous corollary illustrates an interesting interplay between the key design parameters. For a given target rate, the 
error probability $\epsilon$ scales inversely with the average harvested energy $P_{\rm{E}}$ in the high-reliability regime. This implies that increasing $P_{\rm{E}}$ (e.g., by increasing the PB transmit power) reduces the communication unreliability by the same factor.
\end{remark} 

\subsection{Optimal power control}
For optimal performance, the energy harvesting node needs to use the \emph{right} amount of transmit power. 
On the one hand, reducing $P_{\rm{t}}$ helps improve the energy supply probability as a packet transmission is less likely to face an energy outage. On the other hand, it is detrimental for the communication link as it reduces the SNR. We now quantify the optimal transmit power that maximizes the asymptotic achievable rate for a given set of parameters. We note that many of the analytical insights obtained for the asymptotic regime are also useful for the non-asymptotic regime (see Remark \ref{rem:power}).

\begin{corollary}\label{cor: opt power}\normalfont
For a given $\epsilon$ and $P_{\rm{E}}$, there exists an optimal transmit power that maximizes the achievable rate. We let $P_{\rm{t},\infty}^*$ be the rate-maximizing transmit power in the asymptotic blocklength regime. It follows that 
\begin{align}\label{eq: opt P_t}
P_{\rm{t},\infty}^*(\epsilon,P_{\rm{E}},\sigma^2)&=
\sigma^2\left(\frac{\frac{P_{\rm{E}}}{\sigma^2}\log(1+0.5\epsilon)-1}
{
\text{W}\left[\left(\frac{P_{\rm{E}}}{\sigma^2}\log(1+0.5\epsilon)-1\right) e^{-1}\right]
}-1\right)
\end{align}where $\text{W}[\cdot]$ is the Lambert W-function\cite{corless1996lambertw}. 
\begin{proof}
See Appendix A.
\end{proof}
\end{corollary}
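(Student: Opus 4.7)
The plan is to substitute the definitions of $a=P_{\rm{t}}/P_{\rm{E}}$ and $\gamma=P_{\rm{t}}/\sigma^2$ into the asymptotic expression from Proposition \ref{proposition:asym rate} to obtain the rate as an explicit scalar function of $P_{\rm{t}}$, and then carry out a one-variable optimization whose stationarity condition can be reduced, through a sequence of substitutions, to the canonical equation $we^{w}=z$ that defines the Lambert W-function.

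First I would write, with $\epsilon$, $P_{\rm{E}}$, $\sigma^2$ fixed and $b\triangleq P_{\rm{E}}\log(1+0.5\epsilon)>0$,
\begin{equation*}
R^{\infty}_{\textrm{EH}}(P_{\rm{t}})=\frac{b}{b+P_{\rm{t}}}\cdot\frac{1}{2}\log\!\left(1+\frac{P_{\rm{t}}}{\sigma^{2}}\right),
\end{equation*}
and check the boundary behavior $R^{\infty}_{\textrm{EH}}(0)=0$ and $R^{\infty}_{\textrm{EH}}(P_{\rm{t}})\to 0$ as $P_{\rm{t}}\to\infty$. Since $R^{\infty}_{\textrm{EH}}$ is continuous and positive on $(0,\infty)$, the supremum is attained at an interior critical point, so it suffices to solve $\frac{dR^{\infty}_{\textrm{EH}}}{dP_{\rm{t}}}=0$. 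Differentiating and clearing the common factor $b/(b+P_{\rm{t}})^{2}>0$ yields the stationarity condition
\begin{equation*}
\log\!\left(1+\frac{P_{\rm{t}}}{\sigma^{2}}\right)=\frac{b+P_{\rm{t}}}{\sigma^{2}+P_{\rm{t}}}.
\end{equation*}

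Next I would make the change of variables $u=1+P_{\rm{t}}/\sigma^{2}$, so that $\sigma^{2}+P_{\rm{t}}=\sigma^{2}u$ and $b+P_{\rm{t}}=\sigma^{2}u+(b-\sigma^{2})$. The condition becomes $\log u-1=(b/\sigma^{2}-1)/u$, i.e. $u(\log u-1)=c$ with $c\triangleq b/\sigma^{2}-1=\tfrac{P_{\rm{E}}}{\sigma^{2}}\log(1+0.5\epsilon)-1$. Writing $u=e^{v}$ gives $e^{v}(v-1)=c$, and the substitution $w=v-1$ delivers the Lambert form
\begin{equation*}
w\,e^{w}=c\,e^{-1},\qquad\text{hence}\qquad w=\mathrm{W}\!\left[c\,e^{-1}\right].
\end{equation*}
Undoing the substitutions, $v=1+\mathrm{W}[ce^{-1}]$ and, using the identity $e^{\mathrm{W}(x)}=x/\mathrm{W}(x)$, I obtain $u=c/\mathrm{W}[ce^{-1}]$. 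Finally $P_{\rm{t},\infty}^{*}=\sigma^{2}(u-1)$ reproduces the closed form in \eqref{eq: opt P_t}.

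The main obstacle I anticipate is the algebraic reduction in the middle step: one must spot that the stationarity equation becomes an equation of the form $u(\log u-1)=\text{const}$, which is precisely the shape that admits a Lambert W solution after exponentiation; any other substitution leads to a transcendental equation with no clean closed form. A minor secondary point is justifying that the returned critical point is indeed the global maximizer rather than a spurious stationary point: this is handled by the boundary argument above together with the fact that, on the principal branch, $\mathrm{W}[\cdot]$ yields the unique real solution when $ce^{-1}>-e^{-1}$, which holds whenever $P_{\rm{E}}\log(1+0.5\epsilon)>0$, i.e. in all non-degenerate cases.
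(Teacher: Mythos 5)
Your proposal is correct and follows essentially the same route as the paper's Appendix A proof: differentiate the asymptotic rate $\frac{b}{b+P_{\rm{t}}}\cdot\frac{1}{2}\log(1+P_{\rm{t}}/\sigma^2)$, set the derivative to zero, and reduce the stationarity condition to the canonical $we^{w}=\mathrm{const}$ form via a change of variables (your $u=1+P_{\rm{t}}/\sigma^{2}$ is just the paper's $x=P_{\rm{t}}+\sigma^{2}$ rescaled by $\sigma^2$, and both recover the identical closed form). Your added boundary argument for why the interior critical point is the global maximizer is a small bonus the paper leaves implicit, but it does not change the method.
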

Note that $\text{W}[x]$ is a real increasing function of $x$ for $x\geq -\frac{1}{e}$\cite{corless1996lambertw}. As $\frac{P_{\rm{E}}}{\sigma^2}\log\left(1+0.5\epsilon\right)> 0$ in practice, this ensures that the function $\text{W}\left[\left(\frac{P_{\rm{E}}}{\sigma^2}\log(1+0.5\epsilon)-1\right) e^{-1}\right]$ is real, resulting in a nonnegative transmit power. Also, 
plugging $P_{\rm{t}}=P_{\rm{t},\infty}^*$ in Proposition \ref{proposition:asym rate} gives the optimal achievable rate in the asymptotic blocklength regime.
Furthermore, when $P_{\rm{t}}$ is fixed, the achievable rate improves monotonically with $P_{\rm{E}}$ due to an increase in the energy supply probability. 

\begin{remark}\label{rem:power}\normalfont
The optimal transmit power for the asymptotic case serves as a conservative estimate for the optimal transmit power for the non-asymptotic case (Fig. \ref{fig:P_opt vs Pe}). Moreover, the achievable rate in the non-asymptotic regime obtained using the asymptotically optimal transmit power, gives a tight lower bound for the optimal achievable rate in the non-asymptotic regime (Fig. \ref{fig:rate vs. error}). This suggests that Corollary \ref{cor: opt power} provides a useful analytical handle for transmit power selection even for the finite blocklength regime (despite the fact that the resulting rate for the non-asymptotic case could be much smaller than that for the asymptotic case).  
\end{remark}

\begin{corollary}\label{cor:opt ratio}\normalfont
With $\epsilon$ and $\sigma^2$ fixed, the asymptotically optimal transmit power $P_{\rm{t},\infty}^*(\epsilon,P_{\rm{E}},\sigma^2)$ increases with $P_{\rm{E}}$ with a slope 
\begin{align}
\frac{\log(1+0.5\epsilon)}{1+\text{W}\left[\left(\frac{P_{\rm{E}}}{\sigma^2}\log(1+0.5\epsilon)-1\right)e^{-1}\right]}.
\end{align}
The slope is a non-negative decreasing function of the $P_{\rm{E}}$, suggesting that i) the optimal transmit power increases monotonically with $P_{\rm{E}}$, and ii) it is more sensitive to $P_{\rm{E}}$ when $P_{\rm{E}}$ is small. In addition, the optimal transmit power scales sublinearly with $P_{\rm{E}}$.
\begin{proof}
It follows by differentiating the optimal transmit power with respect to $P_{\rm{E}}$.
\end{proof}
\end{corollary}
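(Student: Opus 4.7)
The plan is to differentiate the closed-form expression for $P_{\rm{t},\infty}^*$ in Corollary \ref{cor: opt power} with respect to $P_{\rm{E}}$ and then read off the qualitative behavior from the resulting slope. To keep notation light, I would let $c=\log(1+0.5\epsilon)>0$, $u(P_{\rm{E}})=\frac{P_{\rm{E}}}{\sigma^2}c-1$, and $w=\text{W}[u\,e^{-1}]$, so that the Lambert W defining identity $w e^w = u e^{-1}$ rearranges to $u = w\, e^{w+1}$, i.e., $u/w = e^{w+1}$. Substituting this back into (\ref{eq: opt P_t}) collapses the expression to the compact form
\begin{align}
P_{\rm{t},\infty}^* = \sigma^2\bigl(e^{w+1}-1\bigr),
\end{align}
which is the quantity I would then differentiate.

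Applying the chain rule gives $\frac{dP_{\rm{t},\infty}^*}{dP_{\rm{E}}} = \sigma^2 e^{w+1}\,\frac{dw}{dP_{\rm{E}}}$. The standard derivative identity $\text{W}'(x) = \text{W}(x)/\bigl(x(1+\text{W}(x))\bigr)$, together with $\frac{du}{dP_{\rm{E}}}=c/\sigma^2$, yields $\frac{dw}{dP_{\rm{E}}} = \frac{w}{u(1+w)}\cdot\frac{c}{\sigma^2}$. Substituting $u = w\, e^{w+1}$ cancels the factor $w\, e^{w+1}$ cleanly and leaves exactly the slope $c/(1+w)$ displayed in the statement. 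For the qualitative claims I would then argue as follows: non-negativity holds because $c>0$ and, on the principal branch, $\text{W}[x]\geq -1$ for every admissible $x\geq -e^{-1}$, and the admissibility here translates to $u\geq -1$, i.e., $P_{\rm{E}}\geq 0$, which is automatic; monotonic decrease of the slope in $P_{\rm{E}}$ follows because $u$ is linear and strictly increasing in $P_{\rm{E}}$ and the principal branch of $\text{W}$ is strictly increasing, so the denominator $1+w$ grows; finally, sublinearity of $P_{\rm{t},\infty}^*$ in $P_{\rm{E}}$ follows from the standard asymptotic $\text{W}(x)\sim \log x$ as $x\to\infty$, which forces the slope to be of order $1/\log P_{\rm{E}}$ and hence $P_{\rm{t},\infty}^*/P_{\rm{E}}\to 0$.

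The main obstacle is the algebraic handling of the Lambert W function: one has to recognize that the $e^{w+1}$ appearing in the simplified form of $P_{\rm{t},\infty}^*$ cancels exactly against the $u/w = e^{w+1}$ arising when $\text{W}'$ is expanded. Without first rewriting $P_{\rm{t},\infty}^*$ via $u = w\, e^{w+1}$, the differentiation produces a messier product in both $u$ and $w$ that does not visibly reduce to the stated slope; once that substitution is made, the collapse to $c/(1+w)$ and the subsequent monotonicity and scaling claims are routine.
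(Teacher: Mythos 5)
Your proposal is correct and follows the same route the paper intends ("differentiating the optimal transmit power with respect to $P_{\rm{E}}$"), simply supplying the details the paper omits: the rewrite $P_{\rm{t},\infty}^* = \sigma^2(e^{w+1}-1)$ via $u = w\,e^{w+1}$, the identity $\text{W}'(x)=\text{W}(x)/(x(1+\text{W}(x)))$, and the resulting cancellation all check out and yield exactly the stated slope $\log(1+0.5\epsilon)/(1+\text{W}[\cdot])$. The qualitative conclusions (non-negativity from $w>-1$ on the principal branch, monotone decrease from the monotonicity of $\text{W}$, and sublinearity from $\text{W}(x)\sim\log x$) are also argued correctly.
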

Though the transmit power increases with $P_{\rm{E}}$, the optimal power ratio $a^*=\frac{P_{\rm{t,\infty}}^*}{P_{\rm{E}}}$ is a monotonically decreasing function of $P_{\rm{E}}$. This is because $P_{\rm{t,\infty}}^*$ varies sublinearly with $P_{\rm{E}}$. 

\section{Multiple Power Beacons}\label{secMP}
In this section, we extend the analysis to the case of a large-scale network consisting of power beacons, wireless-powered transmitters, and their dedicated receivers. 
We assume that the power beacons are distributed on a two-dimensional plane according to a homogeneous Poisson point process (PPP) $\Phi=\{x_k\}_{k=1}^{\infty}$ with density (intensity) $\lambda$, where $x_k$  denotes the location of a node $k$ in $\Phi$. 
The energy harvesting transmitters are drawn from another homogeneous PPP independently of the power beacons.
Similar to the case of a single power beacon, each energy harvesting transmitter is assumed to have a dedicated receiver.
Leveraging Slivnyak's theorem\cite{haenggi2012stochastic}, we consider a typical energy harvesting node located at the origin. 
It exploits the energy harvested from the transmissions of multiple power beacons to communicate with its dedicated receiver over a noisy channel. This implicitly assumes that an EH transmitter causes negligible interference to other EH-RX links, since the transmit power of an EH node is usually very small. 
We let $h_k$ model the small-scale fading coefficient for the PB-EH link originating at $x_k$. We assume IID Rayleigh fading for the PB-EH links such that $H_k=|h_k|^2\sim\exp(1)$. As defined previously, $\ell\left(\|x_k\|,\eta\right)$ models the distance-dependent path loss for the link from $x_k$. The energy harvested in an arbitrary channel use for the case of multiple power beacons is given by $Z^{}=P_{\rm{PB}}\mu\sum\limits_{x_k\in\Phi}{\frac{H_k}{ \ell(\|x_k\|, \eta)}}$.
We derive tractable analytical expressions for the energy supply probability and the non-asymptotic achievable rate in a network setting.
\subsection{Energy Supply Probability}
We first characterize the energy supply probability in a general form. We then specialize it to the scenario considered in this paper. 
\begin{proposition}\normalfont\label{lemma: multi_pec}
For the case of multiple power beacons with PB density $\lambda$, the energy supply probability at a typical EH node is given by 
	\begin{align}
		P_{\textrm{es}}^{\rm{MP}}\left(m,n,a,\lambda,\eta\right)&=
1-\sum\limits_{i=0}^{\frac{n}{2}-1}(-1)^{i}\frac{{m}^{i}}{(2a)^ii!}\frac{\text{d}^i}{\text{d}s^i}\mathcal{L}_Z(s)|_{s=\frac{m}{2a}}
	\end{align}
	where the power ratio $a=\frac{P_{\rm{t}}}{\mu P_{\rm{PB}}}$, $\eta$ is the path loss exponent, while $\mathcal{L}_Z(s)=\mathbb{E}[e^{-sZ}]$ is the Laplace transform of the per-slot harvested energy $Z$, which is also a function of $\lambda$ and $\eta$. 
	\begin{proof}
	See Appendix B.
	\end{proof}
\end{proposition}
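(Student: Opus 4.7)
The plan is to mimic the proof of Proposition 1, reducing the multiple energy constraints to a single chi-squared tail inequality and then averaging over $Z$, which is now a shot-noise random variable driven by the PPP of power beacons rather than a single exponential. First I would invoke Lemma 1, which is distribution-free in $Z$, to replace $\bigcup_{k=1}^{n}\{\sum_{\ell\leq k}X_\ell^{2}\leq mZ\}$ by the single event $\sum_{\ell=1}^{n}X_\ell^{2}\leq mZ$; the quasi-static block-fading assumption (the PPP and the Rayleigh coefficients are frozen over the harvest phase) guarantees $Z_i\equiv Z$ throughout a frame, which is exactly what Lemma 1 needs.

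Next I would normalize and rewrite the target probability as an expectation over $Z$. Setting $W=\sum_{\ell=1}^{n}X_\ell^{2}/P_{\rm{t}}$ produces a chi-squared random variable with $n$ degrees of freedom, independent of $Z$. For $n$ even — ensured by the minimum-latency rounding $\lceil\cdot\rceil_{\rm{ev}}$ introduced after Theorem 1 — the chi-squared CDF truncates to the finite sum $F_W(x)=1-e^{-x/2}\sum_{i=0}^{n/2-1}(x/2)^{i}/i!$. Conditioning on $Z$ and substituting $x=mZ/P_{\rm{t}}$ yields
$$P_{\textrm{es}}^{\rm{MP}}=\mathbb{E}_{Z}\!\left[F_W\!\left(\tfrac{mZ}{P_{\rm{t}}}\right)\right]=1-\sum_{i=0}^{n/2-1}\frac{1}{i!}\left(\frac{m}{2P_{\rm{t}}}\right)^{\!i}\mathbb{E}\!\left[Z^{i}e^{-mZ/(2P_{\rm{t}})}\right].$$

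The last step converts the remaining moment into a derivative of the Laplace transform of $Z$. Because $Z\geq 0$ and the kernel $e^{-sZ}$ is smooth and integrable for $s>0$, dominated convergence justifies differentiating under the expectation, giving the identity $\mathbb{E}[Z^{i}e^{-sZ}]=(-1)^{i}\mathcal{L}_Z^{(i)}(s)$. Evaluating at $s=m/(2a)$, where the power-ratio convention $a=P_{\rm{t}}/(\mu P_{\rm{PB}})$ absorbs the factor $\mu P_{\rm{PB}}$ that appears inside $Z$, reproduces the expression stated in the proposition.

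The main obstacle here is accounting rather than analysis: the derivation must carefully track the evenness of $n$ so that the chi-squared series actually terminates at $n/2-1$, justify the derivative–expectation swap for every $i<n/2$, and correctly absorb $\mu P_{\rm{PB}}$ into the rescaling of the Laplace argument so that the evaluation point matches $s=m/(2a)$. The explicit computation of $\mathcal{L}_Z$ itself — via the PPP's probability-generating functional applied to the shot-noise field $P_{\rm{PB}}\mu\sum_{x_k\in\Phi}H_k/\ell(\|x_k\|,\eta)$ — is a standard but separate calculation that would only be needed to obtain a closed-form dependence on $\lambda$ and $\eta$; since the proposition leaves $\mathcal{L}_Z$ abstract, it is not required at this stage.
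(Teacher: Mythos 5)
Your proposal is correct and follows essentially the same route as the paper's Appendix B derivation: reduce to the single event via Lemma 1, write $W=\sum_\ell X_\ell^2/P_{\rm t}$ as a chi-squared (Gamma$(n/2,2)$) variable whose CDF truncates to a finite sum for even $n$, condition on $Z$, and convert the resulting moments $\mathbb{E}[Z^{i}e^{-sZ}]$ into derivatives of $\mathcal{L}_Z$. Your explicit attention to the evenness of $n$, the derivative--expectation swap, and the $\mu P_{\rm PB}$ rescaling that reconciles the evaluation point $s=m/(2P_{\rm t})$ with the stated $s=m/(2a)$ is a welcome tightening of bookkeeping the paper leaves implicit.
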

Note that the power ratio $a$ is defined here slightly differently from the case of single power beacon (Proposition 1). Here, it is defined as the ratio of the transmit power at an energy harvester to that at a power beacon. Previously, it was defined as the ratio of the EH transmit power to the harvested power, i.e., the large-scale fading term, being deterministic, was absorbed in the power ratio. 
For generality, we have expressed Proposition \ref{lemma: multi_pec} in terms of the Laplace transform of the harvested energy. Depending on the propagation and network model, this could be evaluated in closed form. For example, the following lemma analytically characterizes the Laplace transform for the scenario relevant to this paper.
\begin{lemma}\normalfont\label{lemma:laplace}
Let us assume the PBs are drawn from a homogeneous PPP of density $\lambda$, the PB-EH links are IID Rayleigh fading, and follow a bounded path loss model $\ell(r,\eta)=\max(1,r^\eta)$ where $\eta>2$ is the path loss exponent while $r$ is the PB-EH link distance. The Laplace transform $\mathcal{L}_{Z}(s)$ of the per-slot harvested energy $Z$ is analytically characterized by 
\begin{align}\label{eq:laplace}
\mathcal{L}_{Z}(s)=\exp\left(-\pi\lambda\frac{P_{\rm{PB}}\mu s}{1+P_{\rm{PB}}\mu s}\right)
\exp\left(-\pi\lambda~\mathcal{F}\left(P_{\rm{PB}}\mu s,\eta\right)\right),
\end{align}
where the function $\mathcal{F}\left(x_1,x_2\right)$ for $x_1\geq 0, x_2> 2$ is defined as  
\begin{align}\label{eq:functionF}
\mathcal{F}\left(x_1,x_2\right)=\frac{2x_1}{x_2-2}~{_{2}{F}_{1}}\left(1,1-\frac{2}{x_2};2-\frac{2}{x_2};-x_1\right)
\end{align}
in terms of the Gauss's hypergeometric function ${_{2}{F}_{1}}\left(c_1,c_2\,;c_3\,;z\right)$ \cite{prudnikov1998integrals}. 
\begin{proof}
See Appendix B
\end{proof}
\end{lemma}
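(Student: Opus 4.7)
The plan is to compute $\mathcal{L}_Z(s)=\mathbb{E}[e^{-sZ}]$ by first conditioning on the PPP $\Phi$ to handle the randomness of the fading coefficients $\{H_k\}$, then applying the probability generating functional (PGFL) of the homogeneous PPP, and finally reducing the resulting integral to the Euler integral representation of the Gauss hypergeometric function.

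First I would write
\begin{align}
\mathcal{L}_Z(s)=\mathbb{E}\left[\exp\left(-sP_{\rm{PB}}\mu\sum_{x_k\in\Phi}\frac{H_k}{\ell(\|x_k\|,\eta)}\right)\right]
=\mathbb{E}_\Phi\prod_{x_k\in\Phi}\mathbb{E}_{H_k}\!\left[e^{-sP_{\rm{PB}}\mu H_k/\ell(\|x_k\|,\eta)}\right],
\end{align}
where I can interchange the product with the expectation over $\{H_k\}$ by IID independence of the marks. Since $H_k\sim\exp(1)$, the inner expectation is $1/(1+sP_{\rm{PB}}\mu/\ell(\|x_k\|,\eta))$. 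Applying the PGFL of a PPP of density $\lambda$ and converting to polar coordinates yields
\begin{align}
\mathcal{L}_Z(s)=\exp\left(-2\pi\lambda\int_0^\infty\!\left(1-\frac{1}{1+sP_{\rm{PB}}\mu/\ell(r,\eta)}\right)r\,dr\right).
\end{align}

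Next I would exploit the piecewise form $\ell(r,\eta)=\max(1,r^\eta)$ by splitting the integral at $r=1$. On $[0,1]$ the path loss is flat, $\ell=1$, so the integrand is the constant $c/(1+c)$ with $c=sP_{\rm{PB}}\mu$; integration against $r\,dr$ contributes $\tfrac{1}{2}\cdot c/(1+c)$, producing exactly the first exponential in (\ref{eq:laplace}) once the $2\pi\lambda$ factor is restored. On $(1,\infty)$ the integrand simplifies to $c/(r^\eta+c)$, so it remains to evaluate $I(c,\eta):=\int_1^\infty cr/(r^\eta+c)\,dr$.

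The main obstacle is getting $I(c,\eta)$ into the claimed hypergeometric form. I plan to use the two-step substitution $r=(1+t)^{1/\eta}$ followed by $u=1/(1+t)$, which maps $r\in(1,\infty)$ to $u\in(0,1)$ and converts the integrand to $u^{-2/\eta}/(1+cu)$, up to a Jacobian factor $1/\eta$. Matching with the Euler integral representation
\begin{align}
{}_2F_1(a,b;c;z)=\frac{\Gamma(c)}{\Gamma(b)\Gamma(c-b)}\int_0^1 t^{b-1}(1-t)^{c-b-1}(1-zt)^{-a}\,dt
\end{align}
with $a=1$, $b=1-2/\eta$, $c=2-2/\eta$, $z=-c$ (and using $\Gamma(2-2/\eta)/\Gamma(1-2/\eta)=1-2/\eta$), I obtain
\begin{align}
I(c,\eta)=\frac{c}{\eta-2}\,{}_2F_1\!\left(1,\,1-\tfrac{2}{\eta};\,2-\tfrac{2}{\eta};\,-c\right).
\end{align}
Multiplying by the $2\pi\lambda$ factor from the PGFL and identifying $\mathcal{F}(c,\eta)$ via (\ref{eq:functionF}) yields the second exponential factor in (\ref{eq:laplace}). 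Combining the two pieces completes the proof. The path-loss exponent condition $\eta>2$ is precisely what is needed for the outer integral to converge at infinity (equivalently, for the Gamma ratio and the hypergeometric series to be well defined at the chosen parameters).
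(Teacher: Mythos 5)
Your proof is correct and follows essentially the same route as the paper's: condition on the fading, apply the PGFL of the PPP, split the radial integral at $r=1$ under the bounded path loss model, and identify the tail integral with the hypergeometric function. The only difference is that you spell out the substitution and the Euler integral representation that reduce $\int_1^\infty cr/(r^\eta+c)\,\mathrm{d}r$ to $\frac{c}{\eta-2}\,{}_2F_1\left(1,1-\frac{2}{\eta};2-\frac{2}{\eta};-c\right)$, a step the paper asserts without detail; your computation checks out, including the role of $\eta>2$ for convergence.
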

We note that the Laplace transform is expressed in terms of tractable mathematical functions, which can be evaluated using most numerical toolboxes. We now characterize the mean harvested energy in terms of the network density and the path loss exponent.
\begin{lemma}\label{lemm:mean}\normalfont
The average per-slot harvested energy for the case of multiple power beacons is given by $\mathbb{E}\left[Z\right]=\lambda\pi \frac{\eta}{\eta-2}\mu P_{\rm{PB}}.$ This shows that the $\lambda$ and $P_{\rm{PB}}$ have the same effect on the mean harvested energy. 
\begin{proof}
See Appendix B. 
\end{proof}
\end{lemma}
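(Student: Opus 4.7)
The plan is to apply Campbell's theorem for the sum of a function over a Poisson point process, after exploiting the independence of the fading coefficients $H_k$ from the point process $\Phi$. Starting from
\begin{align}
\mathbb{E}[Z] = \mu P_{\rm{PB}}\, \mathbb{E}\!\left[\sum_{x_k\in\Phi}\frac{H_k}{\ell(\|x_k\|,\eta)}\right],\nonumber
\end{align}
I would first condition on $\Phi$, use $\mathbb{E}[H_k]=1$ (unit-mean exponential), and then apply Campbell's theorem for a homogeneous PPP of intensity $\lambda$ to convert the sum into a deterministic integral over $\mathbb{R}^2$. This yields
\begin{align}
\mathbb{E}[Z] = \mu P_{\rm{PB}}\,\lambda \int_{\mathbb{R}^2}\frac{1}{\ell(\|x\|,\eta)}\,\mathrm{d}x.\nonumber
\end{align}

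Next I would switch to polar coordinates to reduce the two-dimensional integral to $2\pi\lambda\int_0^\infty r/\ell(r,\eta)\,\mathrm{d}r$. The bounded path loss law $\ell(r,\eta)=\max(1,r^\eta)$ naturally prompts splitting the radial integral at $r=1$: on $[0,1]$ the integrand reduces to $r$, giving $\tfrac{1}{2}$; on $[1,\infty)$ it becomes $r^{1-\eta}$, which integrates to $\tfrac{1}{\eta-2}$ under the assumption $\eta>2$. Summing the two pieces and simplifying gives $\tfrac{1}{2}+\tfrac{1}{\eta-2}=\tfrac{\eta}{2(\eta-2)}$, so the full prefactor becomes $\pi\lambda\tfrac{\eta}{\eta-2}$, which after reintroducing $\mu P_{\rm{PB}}$ delivers the claimed expression.

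There is no real obstacle here: the argument is essentially a direct application of Campbell's formula, and the bounded path loss model removes the singularity at the origin that would otherwise make $\mathbb{E}[Z]$ diverge. The only technicalities to flag are (i) verifying that the integrand is non-negative and measurable so Fubini/Campbell applies without qualification, and (ii) the need for $\eta>2$ to guarantee integrability at infinity, which is precisely the standing assumption of the paper. The symmetric role of $\lambda$ and $P_{\rm{PB}}$ in the final closed form is then immediate from the product structure of the expression, which is the remark the lemma highlights.
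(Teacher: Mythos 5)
Your proof is correct and follows essentially the same route as the paper: both invoke $\mathbb{E}[H_k]=1$ together with Campbell's theorem for the homogeneous PPP, split the radial integral at $r=1$ under the bounded path loss law $\ell(r,\eta)=\max(1,r^\eta)$, and combine $\tfrac{1}{2}+\tfrac{1}{\eta-2}=\tfrac{\eta}{2(\eta-2)}$ to obtain $\pi\lambda\tfrac{\eta}{\eta-2}\mu P_{\rm{PB}}$. Your added remarks on measurability and the role of $\eta>2$ are sound but not points of divergence from the paper's argument.
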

The following lemmas treat the partial derivatives of the functions involved in the Laplace transform. We will apply them in the analytical characterization of the energy supply probability for the propagation model considered in this paper. 
\begin{lemma}\label{lemma:derivative2}\normalfont
We let ${_{2}{F}_{1}^{(k)}}\left(1,1-\frac{2}{x_2};2-\frac{2}{x_2};-x_1\right)$ denote the $k$th-order partial derivative of the function ${_{2}{F}_{1}^{}}\left(1,1-\frac{2}{x_2};2-\frac{2}{x_2};-x_1\right)$ with respect to the variable $x_1$, where $k=0$ refers to the original function. 
Using the properties of the hypergeometric function\cite{prudnikov1998integrals}, it follows that 
\begin{align}\label{eq:2F1n}
{_{2}{F}_{1}^{(k)}}\left(1,1-\frac{2}{x_2};2-\frac{2}{x_2};-x_1\right)
&=\nonumber\\
(-1)^k k!&\frac{\left(1-\frac{2}{x_2}\right)_{(k)}}{\left(2-\frac{2}{x_2}\right)_{(k)}}~
{_{2}{F}_{1}^{(0)}}\left(k+1,k+1-\frac{2}{x_2};k+2-\frac{2}{x_2};-x_1\right)
\end{align}
where $\left(x\right)_{(k)}=\frac{\Gamma(x+k)}{\Gamma(x)}$ is the Pochhammer symbol, while $\Gamma(x)=\int\limits_{0}^{\infty}t^{x-1}e^{-t}\text{d}t$ is the Gamma function\cite{prudnikov1998integrals}.
\end{lemma}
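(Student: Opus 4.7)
The plan is to derive the identity as a direct application of the standard differentiation formula for the Gauss hypergeometric function combined with the chain rule. Specifically, I will invoke the classical result
\begin{equation}
\frac{\mathrm{d}^k}{\mathrm{d}z^k}\,{_{2}F_{1}}(a,b;c;z)=\frac{(a)_{(k)}(b)_{(k)}}{(c)_{(k)}}\,{_{2}F_{1}}(a+k,b+k;c+k;z),
\end{equation}
which can be found in standard references on special functions (e.g., \cite{prudnikov1998integrals}). This identity is obtained by repeatedly differentiating the defining series $\sum_{j\geq 0}\frac{(a)_{(j)}(b)_{(j)}}{(c)_{(j)}}\frac{z^j}{j!}$ term by term and reindexing the sum, using the Pochhammer shift $(a)_{(j+k)}=(a)_{(k)}(a+k)_{(j)}$.

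Next, I will apply this formula with the specific parameters $a=1$, $b=1-\tfrac{2}{x_2}$, $c=2-\tfrac{2}{x_2}$, and argument $z=-x_1$. Since the argument depends on $x_1$ through the linear map $z=-x_1$, the chain rule contributes a factor of $(-1)^k$ to the $k$th derivative with respect to $x_1$. Using $(1)_{(k)}=k!$, the leading Pochhammer product becomes
\begin{equation}
\frac{(1)_{(k)}\,(1-\tfrac{2}{x_2})_{(k)}}{(2-\tfrac{2}{x_2})_{(k)}}=k!\,\frac{(1-\tfrac{2}{x_2})_{(k)}}{(2-\tfrac{2}{x_2})_{(k)}},
\end{equation}
and the parameters of the hypergeometric function shift by $k$, yielding precisely the right-hand side of \eqref{eq:2F1n}. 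The base case $k=0$ reduces to the identity ${_{2}F_{1}^{(0)}}={_{2}F_{1}}$, which is consistent with the convention stated in the lemma.

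I do not anticipate a genuine obstacle here: the result is essentially a textbook consequence of termwise differentiation of the hypergeometric series, with only the sign factor $(-1)^k$ from the chain rule requiring care. The main bookkeeping task is to verify that the series converges uniformly on compact subsets of the disk $|x_1|<1$ (so that termwise differentiation is justified), and to handle the analytic continuation to $x_1\geq 0$ via the standard integral representation of ${_{2}F_{1}}$, both of which are routine. Once these are in place, the identity follows immediately.
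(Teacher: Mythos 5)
Your proof is correct and follows exactly the route the paper intends: the paper gives no explicit proof, simply citing the standard differentiation property $\frac{\mathrm{d}^k}{\mathrm{d}z^k}\,{_{2}F_{1}}(a,b;c;z)=\frac{(a)_{(k)}(b)_{(k)}}{(c)_{(k)}}\,{_{2}F_{1}}(a+k,b+k;c+k;z)$, which together with the chain-rule factor $(-1)^k$ for $z=-x_1$ and $(1)_{(k)}=k!$ is precisely your argument. Your additional remarks on termwise differentiation and analytic continuation are fine but not needed beyond the citation.
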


\begin{lemma}\label{lemma:derivative1}\normalfont
We let $\mathcal{F}^{(k)}(x_1,x_2)$ denote the $k$th order partial derivative of the function 
$\mathcal{F}^{}(x_1,x_2)$ with respect to the variable $x_1$. It follows that 
\begin{align}\label{eq:Fn}
\mathcal{F}^{(k)}\left(x_1,x_2\right)=
\frac{2k}{x_2-2}~{_{2}{F}_{1}^{(k-1)}}&\left(1,1-\frac{2}{x_2},2-\frac{2}{x_2},-x_1\right)\nonumber\\
&+
\frac{2x_1}{x_2-2}~{_{2}{F}_{1}^{(k)}}\left(1,1-\frac{2}{x_2},2-\frac{2}{x_2},-x_1\right)
\end{align}
where $\mathcal{F}^{(0)}\left(x_1,x_2\right)=\mathcal{F}^{}\left(x_1,x_2\right)$.
\begin{proof}
The result follows by successive differentiation of (\ref{eq:functionF}) with respect to $x_1$, invoking Lemma \ref{lemma:derivative2}, and (recursively) expressing the result in terms of the lower-order derivatives of the original function.
\end{proof}
\end{lemma}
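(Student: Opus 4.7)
The plan is to apply the Leibniz product rule directly to (\ref{eq:functionF}). Observe that $\mathcal{F}(x_1,x_2)$ factors as a product of the \emph{linear} function $u(x_1) = \frac{2x_1}{x_2-2}$ and the hypergeometric function $v(x_1) = {_{2}{F}_{1}^{}}\left(1, 1-\frac{2}{x_2}; 2-\frac{2}{x_2}; -x_1\right)$, both viewed as functions of $x_1$ with $x_2$ held fixed. I would then differentiate $k$ times using the standard identity $\frac{\text{d}^k}{\text{d}x_1^k}[uv] = \sum_{j=0}^{k} \binom{k}{j} u^{(j)} v^{(k-j)}$.

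The critical simplification is that $u$ is affine in $x_1$, so $u^{(j)} \equiv 0$ for every $j\geq 2$; only the $j=0$ and $j=1$ terms survive in the Leibniz sum. The $j=0$ term contributes $u \cdot v^{(k)} = \frac{2x_1}{x_2-2}\, v^{(k)}$, and the $j=1$ term contributes $\binom{k}{1} u' \cdot v^{(k-1)} = \frac{2k}{x_2-2}\, v^{(k-1)}$. Adding the two contributions reproduces exactly the right-hand side of (\ref{eq:Fn}), with the derivatives $v^{(k)}$ and $v^{(k-1)}$ identified with ${_{2}{F}_{1}^{(k)}}(\cdot)$ and ${_{2}{F}_{1}^{(k-1)}}(\cdot)$ as defined in Lemma \ref{lemma:derivative2}.

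The proof is largely mechanical and requires no induction, since the vanishing of $u^{(j)}$ for $j \geq 2$ is uniform in $k$. The only subtle point is the sign convention: $v$ depends on its argument through $-x_1$, so repeated chain-rule application generates the $(-1)^k$ factor appearing in (\ref{eq:2F1n}); however, this factor is already absorbed into the definition of ${_{2}{F}_{1}^{(k)}}$ given in Lemma \ref{lemma:derivative2}, so no additional bookkeeping is needed. The main (minor) obstacle is therefore conceptual rather than computational: recognizing at the outset that the affine nature of the prefactor $u$ collapses an otherwise unwieldy $(k+1)$-term Leibniz sum into just two terms, which is precisely what yields the clean two-term expression in (\ref{eq:Fn}).
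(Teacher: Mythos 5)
Your proposal is correct and takes essentially the same route as the paper: both proofs differentiate the product form of $\mathcal{F}$ in (\ref{eq:functionF}) $k$ times with respect to $x_1$, and your invocation of the general Leibniz rule---with the affine prefactor $\frac{2x_1}{x_2-2}$ annihilating every term with $j\geq 2$---simply makes explicit, in one step, the ``successive differentiation'' and recursion the paper's one-line proof alludes to. Your observation that the $(-1)^k$ from the $-x_1$ argument is already absorbed into the definition of ${_{2}{F}_{1}^{(k)}}$ in Lemma \ref{lemma:derivative2} is also correct, so no additional sign bookkeeping is needed.
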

Leveraging Lemma \ref{lemma:laplace} and Fa\`{a} di Bruno formula\cite{noschese2003differentiation}, we now specialize Proposition \ref{lemma: multi_pec} to the scenario considered in this paper.  	
\begin{proposition}\normalfont\label{prop:mp_exact}
The energy supply probability for the bounded path loss model considered in Lemma \ref{lemma:laplace} can be expressed in closed-form as
\begin{align}
P^{\rm{MP}}_{\textrm{es}}\left(m,n,a,\lambda,\eta\right)=
e^{-\pi\lambda\left(\frac{s}{1+s}+\mathcal{F}\left(s,\eta\right)\right)}
\sum\limits_{i=0}^{\frac{n}{2}-1}\frac{(-s)^{i}}{i!}~
B_i\left(g^{(1)}(s),\cdots,g^{(i)}(s)\right)\bigg|_{s=\frac{m}{2a}}
\end{align}
where $B_i(u_1,\cdots,u_i)$ is the complete Bell polynomial of the second kind \cite{noschese2003differentiation}, and 
\begin{align}
g^{(i)}(s)=-\pi\lambda\left(\left[-\frac{1}{1+s}\right]^{i+1}i!+
~\frac{i}{s}\Upsilon(i-1,\eta)\mathcal{F}^{(i-1)}(s,\eta)
+~\Upsilon(i,\eta)\mathcal{F}^{(i)}(s,\eta)
\right),
\end{align}
where 
$\mathcal{F}^{(i)}\left(x_1,x_2\right)$ is given in Lemma \ref{lemma:derivative1} and $\Upsilon\left(i,x_2\right)=(-1)^{i}\,i!\,\frac{\left(1-\frac{2}{x_2}\right)_{\left(i\right)}}{\left(2-\frac{2}{x_2}\right)_{\left(i\right)}}$.
\begin{proof}
The proof follows by invoking Fa\`{a} di Bruno formula\cite{noschese2003differentiation} to calculate the partial derivatives of the Laplace transform in Lemma \ref{lemma:laplace}, and applying Lemma \ref{lemma:derivative2} and \ref{lemma:derivative1}.
\end{proof}
\end{proposition}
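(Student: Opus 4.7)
The plan is to reduce the claim to a mechanical calculation by combining three ingredients: the general derivative formula for $P_{\textrm{es}}^{\rm{MP}}$ in Proposition \ref{lemma: multi_pec}, the closed-form Laplace transform of $Z$ in Lemma \ref{lemma:laplace}, and the Faà di Bruno formula for the $i$-th derivative of a composition with the exponential.

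First, I would write the Laplace transform of Lemma \ref{lemma:laplace} as $\mathcal{L}_Z(s)=\exp\bigl(\varphi(s)\bigr)$ by collecting the two exponential factors into a single exponent and absorbing the $P_{\rm{PB}}\mu$ factor into the variable. The exponential specialization of Faà di Bruno's formula then gives
\[
\frac{d^i}{ds^i}\exp\bigl(\varphi(s)\bigr)=\exp\bigl(\varphi(s)\bigr)\,B_i\bigl(\varphi^{(1)}(s),\ldots,\varphi^{(i)}(s)\bigr),
\]
where $B_i$ is the complete Bell polynomial of the second kind. This already produces the prefactor $e^{-\pi\lambda(s/(1+s)+\mathcal{F}(s,\eta))}$ outside the sum and the Bell-polynomial structure inside the sum in the target expression.

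Next, I would compute $\varphi^{(i)}(s)$ term by term. The exponent splits as a rational part plus a hypergeometric part. For the rational part, writing $s/(1+s)=1-1/(1+s)$ and differentiating $i$ times yields the $\bigl[-1/(1+s)\bigr]^{i+1}\,i!$ contribution. For the hypergeometric part $\mathcal{F}(s,\eta)$, Lemma \ref{lemma:derivative1} supplies the recursion for $\mathcal{F}^{(i)}$ and Lemma \ref{lemma:derivative2} evaluates the derivatives of the underlying ${_2F_1}$, producing the Pochhammer ratios that define $\Upsilon(i,\eta)$. These pieces combine to give the $\Upsilon(i-1,\eta)\mathcal{F}^{(i-1)}$ and $\Upsilon(i,\eta)\mathcal{F}^{(i)}$ contributions, with the $i/s$ weight emerging from the recursion. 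Carrying the common $-\pi\lambda$ through all derivatives reproduces the stated $g^{(i)}(s)$.

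Finally, I would plug $\frac{d^i}{ds^i}\mathcal{L}_Z(s)=\mathcal{L}_Z(s)\,B_i\bigl(g^{(1)},\ldots,g^{(i)}\bigr)$ into Proposition \ref{lemma: multi_pec}, pull the $s$-dependent exponential factor outside the summation, and combine $(-1)^i(m/(2a))^i/i!$ evaluated at $s=m/(2a)$ into the compact $(-s)^i/i!$ appearing in the claim. The main obstacle is bookkeeping rather than conceptual: one must absorb the $P_{\rm{PB}}\mu$ prefactor from the Laplace argument of Lemma \ref{lemma:laplace} into the variable consistently across every derivative of both the rational and hypergeometric pieces, and line up the Pochhammer reindexing of Lemma \ref{lemma:derivative2} with the recursion of Lemma \ref{lemma:derivative1} so that the resulting coefficients match $\Upsilon(i,\eta)$ and $\frac{i}{s}\Upsilon(i-1,\eta)$ exactly. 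Once that tracking is done carefully, the closed form is a direct specialization of Faà di Bruno together with the two derivative lemmas.
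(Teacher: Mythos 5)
Your proposal follows exactly the paper's own route: write $\mathcal{L}_Z(s)=e^{\varphi(s)}$, apply the exponential specialization of Fa\`{a} di Bruno to get $\mathcal{L}_Z^{(i)}(s)=\mathcal{L}_Z(s)\,B_i\bigl(\varphi^{(1)},\ldots,\varphi^{(i)}\bigr)$, evaluate $\varphi^{(i)}$ from the rational and hypergeometric pieces via Lemmas \ref{lemma:derivative2} and \ref{lemma:derivative1}, and substitute into Proposition \ref{lemma: multi_pec} with the $P_{\rm{PB}}\mu$ factor absorbed so that the evaluation point becomes $s=\frac{m}{2a}$. The only caveat is that a faithful substitution into Proposition \ref{lemma: multi_pec} retains the leading ``$1-$'' in front of the sum, which the displayed statement of the proposition appears to have dropped; your derivation (correctly) would produce it.
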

We note that the energy supply probability in Proposition \ref{prop:mp_exact} is expressed in terms of numerically tractable mathematical functions, which can be evaluated using most numerical toolboxes. Moreover, our analytical treatment is fairly general since Proposition 3 can be specialized to various scenarios, similar to the derivation of Proposition 4.

\subsection{Achievable Rate}
Leveraging the results in the previous sections, we now provide an analytical treatment of the achievable rate for the case of multiple power beacons.
\begin{theorem}\normalfont\label{th:MP}
	When the EH nodes are powered by multiple PBs distributed with a density $\lambda$, the non-asymptotic $\epsilon$-achievable rate at a typical intended receiver is characterized by 
	\begin{align}\label{eq:maind0}
	R_{\rm{EH}}^{\rm{MP}}\left(\epsilon,a,\gamma,m,n,\lambda\right)=\frac{\frac{n\log(1+\gamma)}{2}
		-\sqrt{\frac{2+\epsilon}{\epsilon}\frac{\gamma}{\gamma+1}n}
		-{(n)}^{\frac{1}{4}}-1}
	{n+m}
	\end{align}
	for all tuples $(m,n)$ satisfying the following constraints.
	\begin{align}\label{eq:maind1}
	\sum\limits_{i=0}^{\frac{n}{2}-1}(-1)^{i}\frac{{m}^{i}}{(2a)^ii!}\frac{\text{d}^i}{\text{d}s^i}\mathcal{L}_Z(s)\bigg|_{s=\frac{m}{2a}} \leq\frac{\epsilon}{2+\epsilon}
	\end{align}
	where $\mathcal{L}_{Z}\left(s\right)$ follows from Lemma \ref{lemma:laplace}; and
	\begin{align}\label{eq:maind2}
	n\geq \left[\log\left(\frac{2+\epsilon}{\epsilon^2}\right)\right]^4 .
	\end{align}	
\end{theorem}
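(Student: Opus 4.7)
The plan is to mirror the proof of Theorem~1 given in Appendix~A, substituting the multi-beacon energy supply probability from Proposition~\ref{prop:mp_exact} for the single-beacon expression from Proposition~1. The structural backbone is unchanged because the information transmission phase is still an AWGN channel of blocklength $n$ with transmit power $P_{\rm{t}}$, and the energy harvesting mechanism only enters through the probability that the codeword $X^n$ can actually be placed on the channel. Only the law of $Z$ has changed.

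First, I would decompose the overall error event at the destination into two disjoint contributions: (i) an \emph{energy outage} event $\mathcal{E}_{\rm{eo}}=\{\sum_{i=1}^{n}X_i^2>mZ\}$, on which the transmitter cannot place the intended codeword on the channel, and (ii) a \emph{decoding error} event on the AWGN channel conditional on the codeword being transmitted without energy outage. A union bound yields $\Pr\{\hat{W}\neq W\}\leq P_{\rm{eo}}^{\rm{MP}}(m,n,a,\lambda,\eta)+\epsilon_{\rm{ch}}$, where $\epsilon_{\rm{ch}}$ is the decoding error of the induced Gaussian code. I would then budget $\epsilon$ into two pieces by requiring $P_{\rm{eo}}^{\rm{MP}}\leq \frac{\epsilon}{2+\epsilon}$ and $\epsilon_{\rm{ch}}\leq\frac{2}{2+\epsilon}\,\epsilon$, so that the two combine to a total error at most $\epsilon$. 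The first inequality, rewritten via Proposition~\ref{prop:mp_exact} (or equivalently Proposition~\ref{lemma: multi_pec}), is precisely the constraint~(\ref{eq:maind1}); this is where the only substantive deviation from the single-PB proof occurs.

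Next, I would invoke a finite-blocklength achievability result for the AWGN channel, as in Polyanskiy \emph{et al.}, to realize the channel error budget. Using the $\kappa\beta$-type bound with a Gaussian input and the AWGN dispersion $V(\gamma)=\frac{\gamma}{\gamma+1}$, an $(n,M_n,\epsilon_{\rm{ch}})$-code exists with
\[
\log M_n \;\geq\; \tfrac{n}{2}\log(1+\gamma) \;-\; \sqrt{n\,V(\gamma)\,\tfrac{2+\epsilon}{\epsilon}} \;-\; n^{1/4}-1,
\]
provided $n$ is large enough for the residual terms in the normal approximation to be dominated by $n^{1/4}$; enforcing this lower bound on $n$ gives constraint~(\ref{eq:maind2}), identical to the single-PB case, since the channel portion of the argument is independent of $\lambda$. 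Dividing $\log M_n$ by the total blocklength $m+n$ (because the frame includes the harvest phase) yields the rate expression~(\ref{eq:maind0}).

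The main obstacle I anticipate is verifying that constraint~(\ref{eq:maind1}) is a tight statement of $P_{\rm{eo}}^{\rm{MP}}\leq\frac{\epsilon}{2+\epsilon}$ under the general Laplace-transform formulation of Proposition~\ref{lemma: multi_pec}. Specifically, I need to check that the identity
\[
P_{\rm{es}}^{\rm{MP}}(m,n,a,\lambda,\eta)=\Pr\!\left[\sum_{i=1}^{n}X_i^2\leq mZ\right]=1-\sum_{i=0}^{\frac{n}{2}-1}\frac{(-1)^{i}m^{i}}{(2a)^{i}i!}\,\frac{\mathrm{d}^{i}}{\mathrm{d}s^{i}}\mathcal{L}_{Z}(s)\Big|_{s=\frac{m}{2a}}
\]
is what appears in~(\ref{eq:maind1}), and therefore that~(\ref{eq:maind1}) is exactly $P_{\rm{eo}}^{\rm{MP}}\leq\frac{\epsilon}{2+\epsilon}$. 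This is a direct application of Proposition~\ref{lemma: multi_pec}, but care is needed because $Z$ is no longer exponential and the CDF of $W=\sum_{i=1}^{n}X_i^2/P_{\rm{t}}$ (a $\chi^2_n$ variable) must be integrated against the more general law of $Z$; the Taylor series representation of the lower-incomplete-gamma CDF of $W$, followed by the identification of moments of $e^{-sZ}$ with derivatives of $\mathcal{L}_{Z}$, is what produces the finite alternating sum. Once this book-keeping is settled, everything else in the proof is a direct transcription of the single-PB argument.
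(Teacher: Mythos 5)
Your proposal follows essentially the same route as the paper: the paper's Appendix~B proof likewise replaces the single-beacon energy-supply expression with the Laplace-transform form of Proposition~\ref{lemma: multi_pec}, imposes $\Pr\bigl[\bigcup_{k=1}^{n}\{\sum_{\ell=1}^{k}X_\ell^2\geq\sum_{i=1}^{m}Z_i\}\bigr]\leq\frac{\epsilon}{2+\epsilon}$ (which is exactly constraint~(\ref{eq:maind1})), and then repeats the threshold-decoding argument of Appendix~A verbatim to obtain~(\ref{eq:maind0}) and~(\ref{eq:maind2}). The one slip is your error budget: $\frac{\epsilon}{2+\epsilon}+\frac{2\epsilon}{2+\epsilon}=\frac{3\epsilon}{2+\epsilon}>\epsilon$ for $\epsilon<1$, whereas the paper allocates $\frac{\epsilon}{2+\epsilon}$ to energy outage and $\frac{\epsilon+\epsilon^2}{2+\epsilon}$ to channel decoding (itself split as $\frac{\epsilon}{2+\epsilon}$ from Chebyshev's inequality plus $\frac{\epsilon^2}{2+\epsilon}$ from the union bound over wrong codewords, the latter term being what forces~(\ref{eq:maind2})), summing to exactly $\epsilon$. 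Since the bound on $\log M_n$ you write down already carries the $\sqrt{\tfrac{2+\epsilon}{\epsilon}}$ and $n^{1/4}$ terms corresponding to that correct split, this is a bookkeeping slip in the prose rather than a flaw in the derivation, and is fixed by adopting the paper's allocation.
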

\begin{proof}
	See Appendix B.
\end{proof}
The achievable rate expression for the case of multiple power beacons can be interpreted similar to the case of a single power beacon. For example, we may evaluate the expression following the minimum latency approach defined previously. For generality, we have expressed
Theorem \ref{th:MP} in terms of the Laplace transform, which can be evaluated using Proposition \ref{prop:mp_exact}. 

\section{Numerical Results}\label{secSim}
We now present the simulation results for the energy supply probability and the achievable rate based on the analyses in Section \ref{secAnl} and \ref{secMP}.
We assume that the noise power $\sigma^2=1$, the rectifier efficiency $\mu=1$, and path loss exponent $\eta=3.6$. We do not specify the units of $P_{\rm{t}}$, $P_{\rm{PB}}$, or $P_{\rm{E}}$ since the results are valid for any choice of the units (say Joules/symbol).
\subsection{Single Power Beacon}
We first present the results for the case of a single power beacon treated in Section \ref{secAnl}.
In the following plots, we adopt the minimum latency approach where the minimum possible blocklength is selected for the given set of parameters, based on the constraints in (\ref{eq:mainc1}) and (\ref{eq:mainc2}). That is, for a given $\epsilon$, we select the minimum required $n$ using $n= \Big\lceil\left(\log\left(\frac{2+\epsilon}{\epsilon^2}\right)\right)^4\Big\rceil_{\rm{ev}}$. We then choose the minimum required $m$ using (\ref{eq:mainc22}). 
In Fig. \ref{fig:rate vs ratio}, we use Theorem 1 and Proposition 2 to plot the achievable rate versus the power ratio $a$ for a given $\epsilon$ and $P_{\rm{E}}$. The plot reflects the underlying tension between the energy supply probability and the channel SNR, resulting in an optimal transmit power (or power ratio) that maximizes the achievable rate. 
We also observe that the EH node can transmit at a higher rate as the target error probability is increased.

In Fig. \ref{fig:rate vs. error}, we plot the achievable rate versus the target error probability $\epsilon$ for a  given power ratio $a$. 
 We first consider the (fixed power) case where we fix the transmit power $P_{\rm{t}}=1.1554$ and the power ratio $a=0.0012$ (these values are asymptotically optimal for $P_{\rm{E}}=10^3$ and $\epsilon=10^{-3}$).
 As $\epsilon$ increases, the achievable rate tends to increase until a limit, beyond which the rate tends to decrease. This is because as we allow for more error ($\epsilon \uparrow$), the required total blocklength decreases. This means a possible increase in the energy supply probability (as the power ratio is fixed), and a larger backoff from capacity due to a shorter transmit blocklength. Beyond a certain $\epsilon$, further reduction in blocklength pronounces the higher order backoff terms, eventually reducing the rate. For a \emph{fixed} total blocklength, however, the achievable rate indeed increases with $\epsilon$. We note that these trends differ from the asymptotic case where the rate monotonically increases with $\epsilon$. 
  We then consider the case where we adapt the transmit power using Corollary \ref{cor: opt power}. In Fig. \ref{fig:rate vs. error}, we observe a substantial increase in the rate by optimally adjusting the transmit power in terms of the system parameters. Moreover, using the \emph{asymptotically} optimal transmit power $P^*_{\rm{t,\infty}}$ (from Corollary \ref{cor: opt power}) in the finite blocklength regime results in only a minor loss in performance. As evident from Fig. \ref{fig:rate vs. error}, the optimal rate in the finite blocklength regime (obtained by numerically optimizing over $P_{\rm{t}}$) is almost indistinguishable from the lower bound obtained using the asymptotically optimal power $P^*_{\rm{t,\infty}}$.

\begin{figure} [t]
	\centerline{
		\includegraphics[width=.7\columnwidth]{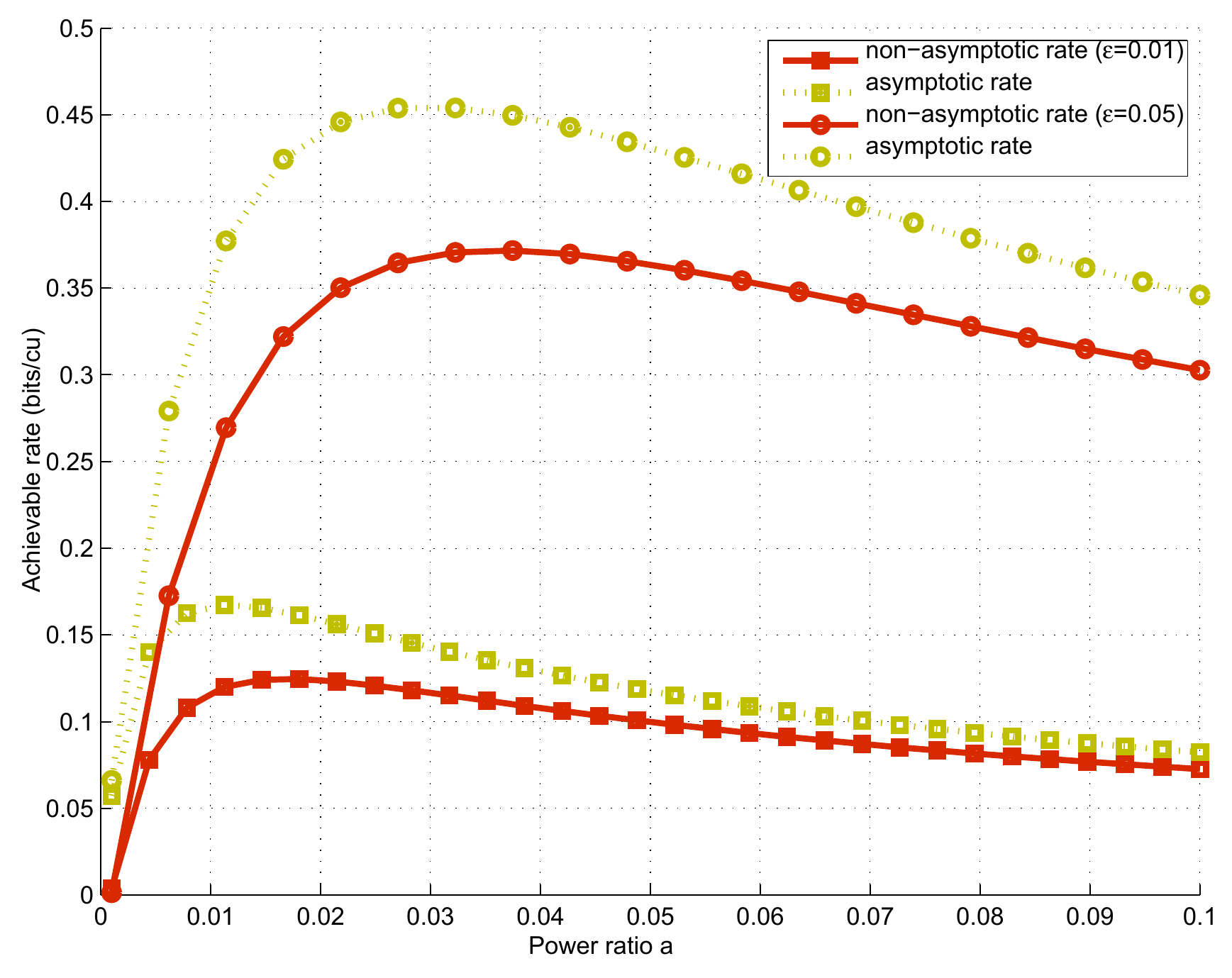}}
	\caption{The achievable rate (bits/channel use) vs. the power ratio $a=\frac{P_{\rm{t}}}{P_{\rm{E}}}$ for $P_{\rm{E}}={10}^2$. There is an optimal transmit power that maximizes the rate.}
	\label{fig:rate vs ratio}
\end{figure}

\begin{figure} [t]
	\centerline{
		\includegraphics[width=0.7\columnwidth]{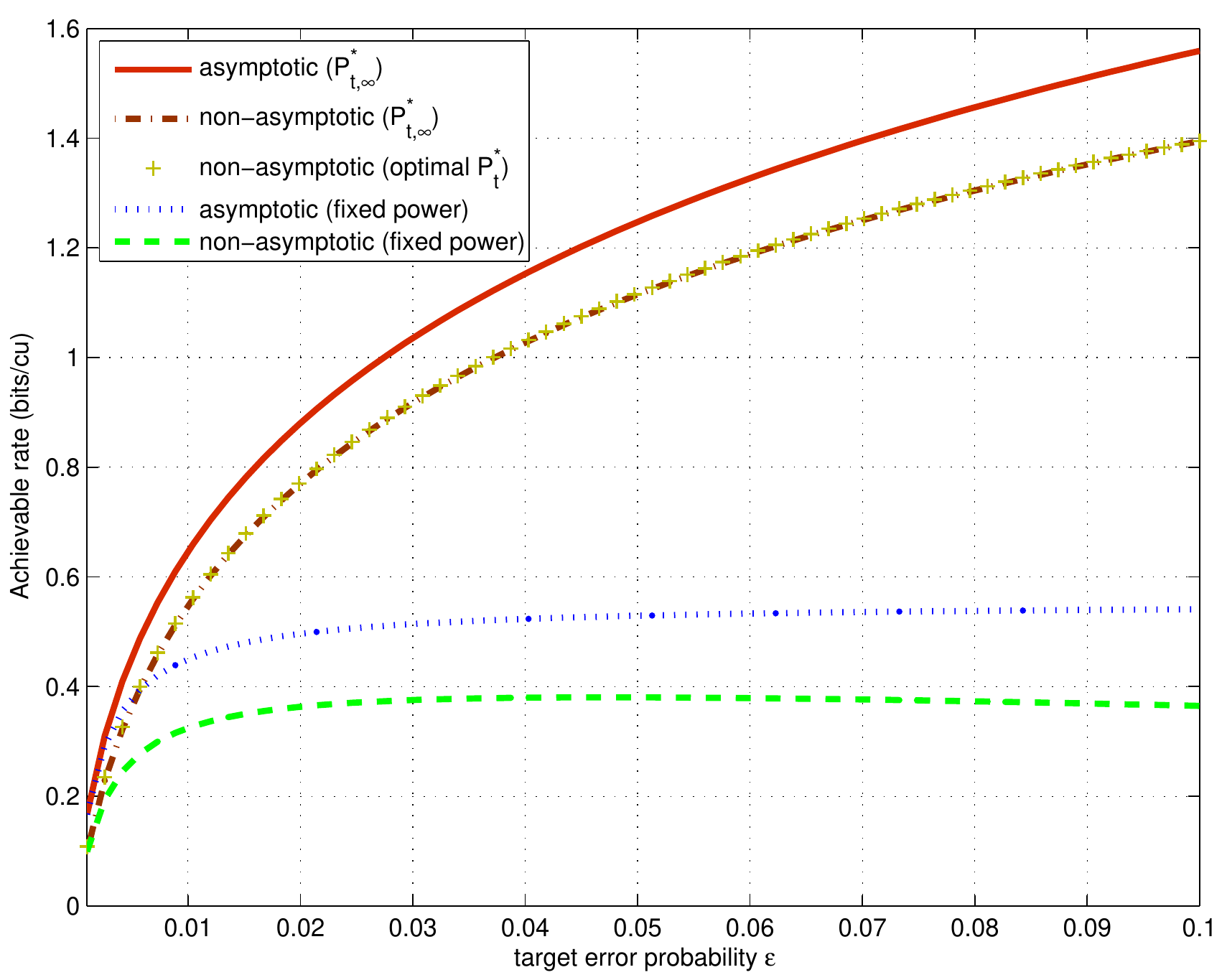}}
	\caption{The achievable rate (bits/channel use) vs. the target error probability $\epsilon$ for a given power ratio $a=0.0012$. 
	While the asymptotic rate increases as we allow for more error, the non-asymptotic rate behaves differently. Moreover, power control is essential for improving the achievable rate.}
	\label{fig:rate vs. error}
\end{figure}

\begin{figure} [t]
	\centerline{	\includegraphics[width=.7\columnwidth]{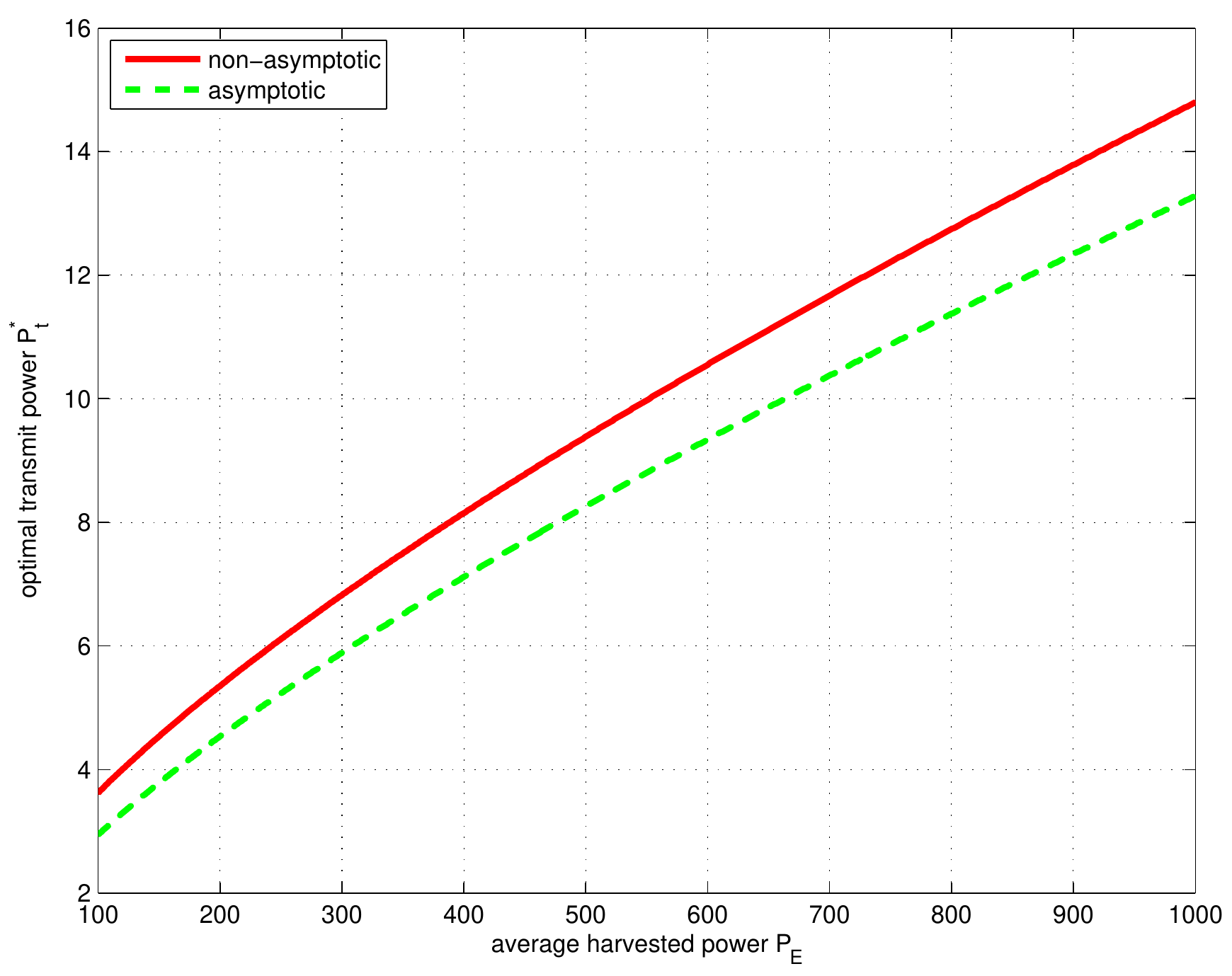}}
	\caption{Optimal transmit power $P_{\rm{t}}$ vs. average harvested power $P_{\rm{E}}$ in the asymptotic and non-asymptotic blocklength regimes. The asymptotically optimal transmit power is a conservative estimate of the non-asymptotic transmit power.}
	\label{fig:P_opt vs Pe}
\end{figure}

\begin{figure} [t]
	\centerline{	\includegraphics[width=.7\columnwidth]{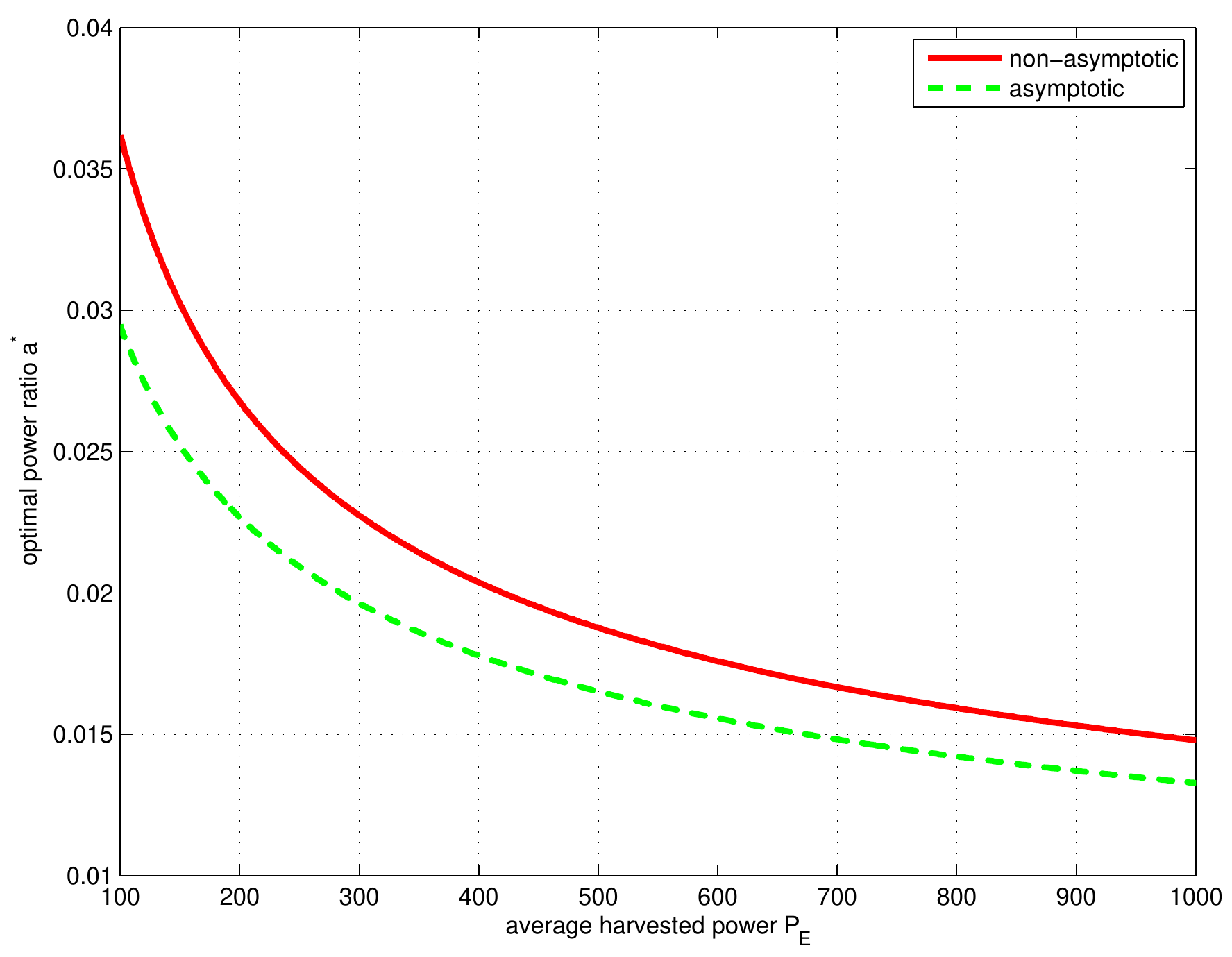}}
	\caption{Optimal power ratio $a^*$ vs. average harvested power $P_{\rm{E}}$ in asymptotic and non-asymptotic blocklength regimes. The optimal power ratio decays as the average harvested power is increased.}
	\label{fig:a_opt vs Pe}
\end{figure}

In Fig. \ref{fig:P_opt vs Pe}, we plot the optimal transmit power versus the average harvested power for $\epsilon=0.05$ and the transmit blocklength $n=\lceil \log\left(\frac{2+\epsilon}{\epsilon^2}\right)^4\rceil_{\rm{ev}}=2026$. For each $P_{\rm{E}}$, the harvest blocklength is selected to satisfy the constraints in (\ref{eq:mainc1}) and (\ref{eq:mainc2}). We observe that the asymptotically optimal transmit power is a conservative estimate of the optimal transmit power for the finite case (Remark \ref{rem:power}). 
In Fig. \ref{fig:a_opt vs Pe}, we plot the optimal power ratio against the average harvested power.
Even though the optimal transmit power increases with $P_{\rm{E}}$,  
we note that the optimal power ratio still decreases as $P_{\rm{E}}$ is increased. In other words, while it is optimal to increase $P_{\rm{t}}$ with $P_{\rm{E}}$, the scaling is sublinear in $P_{\rm{E}}$ (Corollary \ref{cor:opt ratio}). 
 
\begin{figure} [t]
	\centerline{	\includegraphics[width=.7\columnwidth]{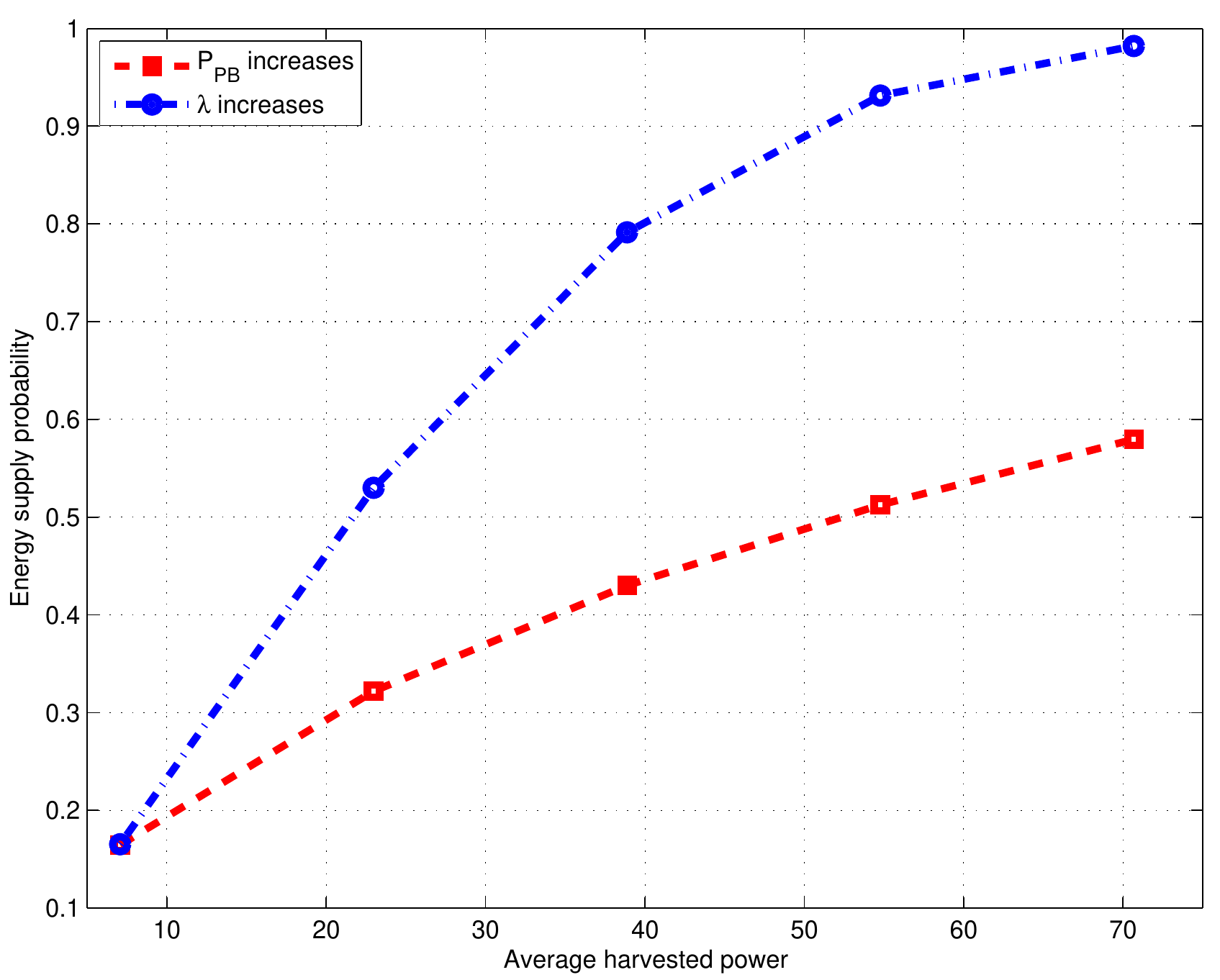}}
	\caption{The energy supply probability $P_{\rm{es}}(m,n,a,\lambda,\eta)$ vs. the average harvested power for $m=1500$, $n=1000$, $P_{\rm{t}}=1$. For the same \textit{mean} harvested power, increasing the PB density is more beneficial than increasing the PB transmit power.}
	\label{fig:Pes vs density}
\end{figure}
\subsection{Multiple Power Beacons}
We now consider the case of multiple power beacons treated in Section \ref{secMP}. 
In Fig \ref{fig:Pes vs density}, we plot the energy supply probability versus the mean harvested power for a fixed total blocklength and the EH transmit power. 
The average harvested power is increased by increasing either the PB transmit power $P_{\rm{PB}}$ or the PB density $\lambda$, according to Lemma \ref{lemm:mean}.
We consider two cases: i) $\lambda$ is fixed and $P_{\rm{PB}}$ is increased, and ii) $P_{\rm{PB}}$ is fixed and $\lambda$ is increased. For the former, we obtain the plot for $P_{\rm{PB}}$ ranging from $10^{3}$ to $10^{4}$ and $\lambda=10^{-3}$ nodes per $\rm{m}^2$.  
For the latter, we assume $\lambda$ ranges from $10^{-3}$ to $10^{-2}$ nodes per $\rm{m}^2$ and $P_{\rm{PB}}=10^3$.  
Keeping the average harvested power same in both cases, we observe that increasing the PB density is more beneficial for the energy supply probability than increasing the PB transmit power.
Finally, in Fig. \ref{fig: Rate vs blklength}, we invoke Theorem \ref{th:MP} to plot the achievable rate versus the transmit blocklength under the minimum latency approach. Moreover, we numerically optimize over the transmit power $P_{\rm{t}}$ for each $n$. 
We observe that the achievable rate is extremely sensitive to the blocklength, confirming that the asymptotic analyses fail to capture the behavior of a wirelessly powered system with short packets.

\begin{figure} [t]
	\centerline{	\includegraphics[width=.7\columnwidth]{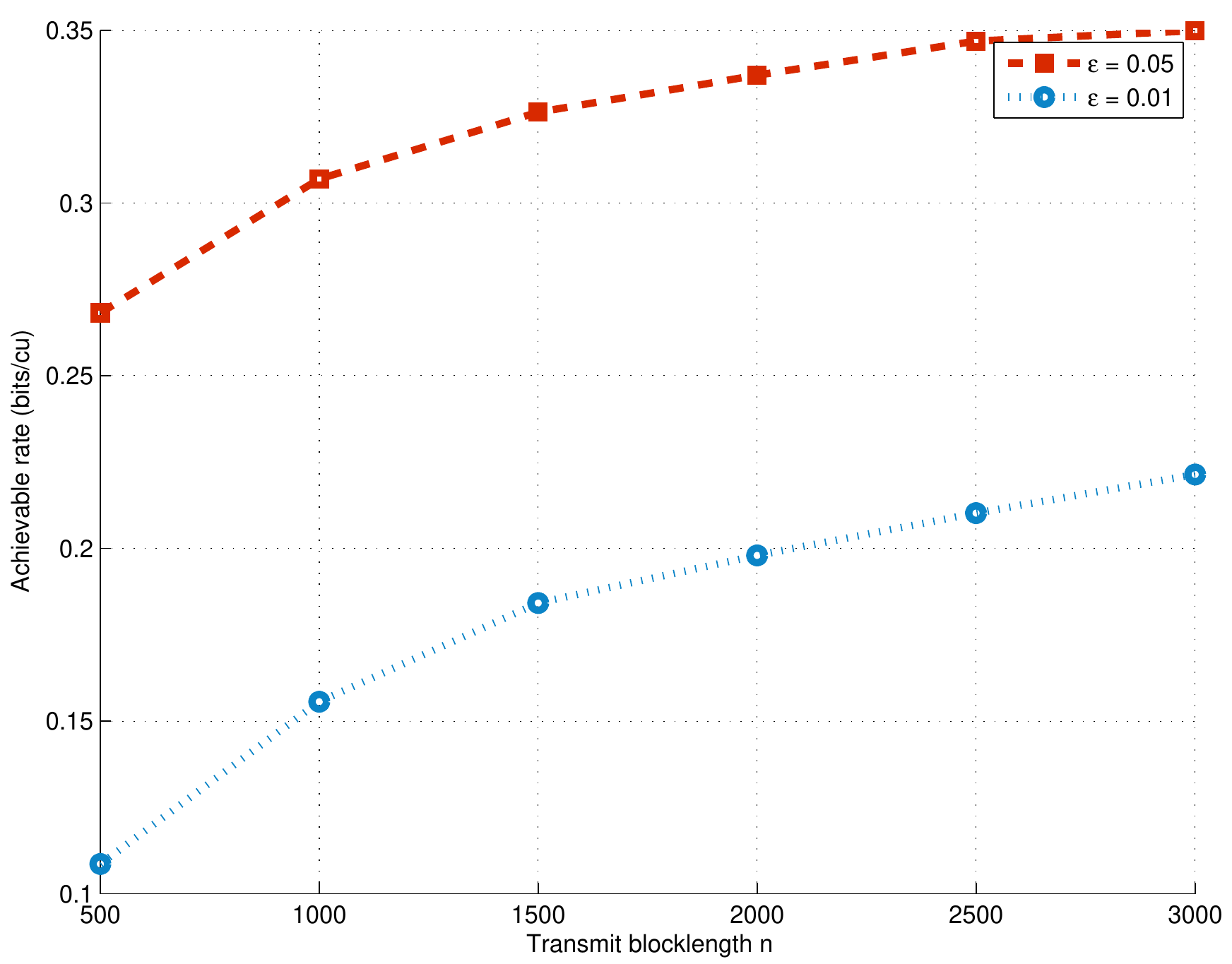}}
	\caption{The $\epsilon$-achievable rate versus transmit blocklength at a typical harvester powered by multiple power beacons ($\lambda=0.005$ nodes per $\rm{m}^2$). The achievable rate improves as the blocklength is increased, confirming that the non-asymptotic rate is substantially smaller than the asymptotic rate.}
	\label{fig: Rate vs blklength}
\end{figure}

\section{Conclusions}\label{secConc}
We characterized the energy supply probability and the achievable rate of a wireless-powered communication system in the finite blocklength regime. Using analytical expressions as well as numerical simulations, we investigated the interplay between key system parameters such as the harvest blocklength, the transmit blocklength, the error probability, and the power ratio. 
For the case of a single power beacon, we showed that the harvest blocklength should be scaled proportionally to the transmit blocklength in order to maintain the $\epsilon$-achievable rate. The rate of growth is characterized by the power ratio as well as the target error probability.
Moreover, we derived closed-form expression for the optimal transmit power in the asymptotic blocklength regime. 
Numerical results show that using the asymptotically optimal transmit power can substantially improve the achievable rate even in the finite blocklength regime. We also extended the analysis to a large-scale network with Poisson-distributed power beacons. Numerical results reveal that the performance is sensitive to the blocklength, confirming that the asymptotic analyses of wireless-powered systems fail to capture the behavior in the short packet regime.
\appendices
\allowdisplaybreaks
\section*{Appendix A: Single Power Beacon}
\subsection*{Proof of Theorem 1}
The proof leverages the fact that the communication link failure mainly results from two events: energy outages at the transmitter or decoding error at the receiver. The first step of the proof involves bounding the decoding errors due to energy outages and channel noise in terms of the target error probability. The second step uses conventional information theoretic arguments to derive an expression for the non-asymptotic achievable rate for the considered wireless-powered channel. 
Let us first bound the energy outage probability as
\begin{align}\label{eq:eneergyBound1}
\Pr\left[{\bigcup_{k=1}^{n}}\left\{\sum_{\ell=1}^{k}X_\ell^2\geq\sum_{i=1}^{m}Z_i\right\} \right]&\leq 1-\frac{2}{2+\epsilon}
\end{align}
for $\epsilon\in[0,1)$. Using Lemma \ref{lemma:energy}, the constraint in (\ref{eq:eneergyBound1}) can be equivalently expressed in terms of the energy supply probability as
$\Pr\left[\sum_{\ell=1}^{n}X_\ell^2\leq\sum_{i=1}^{m}Z_i \right]
\geq \frac{2}{2+\epsilon}$. 
We let $X^n(W)$ and $Y^n$ denote the intended codeword sequence for a message $W\in\mathcal{W}$, and the received sequence. The following proof is inspired by the proof techniques in \cite{fong2015non}. The decoder $\mathcal{G}({Y}^n)$ employs the following threshold decoding rule \cite{fong2015non} to decode the received signal: $\mathcal{G}({Y}^n)=i$ if there exists a unique integer $i\in\mathcal{W}$ that satisfies
\begin{align}\label{eq:rule}
\log\left(\frac{p_{Y^n|X^n}\left(Y^n|X^n(i)\right)}{p_{Y^n}\left(Y^n\right)}\right)>\log(M)+n^{\frac{1}{4}},
\end{align}
otherwise $\mathcal{G}({Y}^n)=w$, where $w$ is drawn uniformly at random from $\mathcal{W}$. 
Here, the notation $p_{Y^n|X^n}(\cdot)$ denotes the joint conditional distribution of random sequence $Y^n$ given $X^n$.
We express the probability of decoding error
$\Pr\left[\mathcal{G}({Y}^n)\neq W\right]$ in (\ref{eq:decodingerror}).
\begin{align}\label{eq:decodingerror}
\Pr&\left[\mathcal{G}({Y}^n)\neq W\right]=\nonumber\\
&\Pr\left[\mathcal{G}({Y}^n)\neq W, {Y}^n=X^n(W)+V^n\right]
+\Pr\left[\mathcal{G}({Y}^n)\neq W, {Y}^n\neq X^n(W)+V^n\right]\nonumber\\
&\leq\Pr\left[\mathcal{G}(X^n(W)+V^n)\neq W\right]+
\frac{\epsilon}{2+\epsilon},
\end{align}
where the inequality results from (\ref{eq:eneergyBound1}).
To calculate $\Pr\left[\mathcal{G}(X^n(W)+V^n)\neq W\right]$, we define ${\mathcal{A}}_{i|j}$ as the event that $i\in\mathcal{W}$ satisfies the threshold decoding rule of (\ref{eq:rule}) when $j\in\mathcal{W}$ is transmitted, i.e., 
\begin{align}\label{eq:Aij}
{\mathcal{A}}_{i|j}=\left\{\log\left(\frac{p_{Y^n|X^n}\left(X^n(j)+V^n|X^n(i)\right)}{p_{Y^n}\left(X^n(j)+V^n\right)}\right)>\log(M)+n^{\frac{1}{4}}\right\},
\end{align}
and ${\mathcal{A}}_{i|j}^c$ denotes its complement. As the message $W$ is uniform on $\mathcal{W}$, it follows that the decoding error probability 

\allowdisplaybreaks\begin{align}
\Pr\left[\mathcal{G}(X^n(W)+V^n)\neq W\right]
&\overset{(a)}{=}\frac{1}{M}\sum_{w=1}^{M}\Pr\left[{\mathcal{A}}_{w|w}^c\bigcup\bigcup_{i\neq w, i\in\mathcal{W}}{\mathcal{A}}_{i|w}\Big|W=w\right]\nonumber\\
&\overset{(b)}{=}\Pr\left[{\mathcal{A}}_{1|1}^c\bigcup\bigcup_{i=2}^M{\mathcal{A}}_{i|1}\right]
\overset{(c)}{\leq}\Pr\left[{\mathcal{A}}_{1|1}^c\right]+\Pr\left[\bigcup_{i=2}^M{\mathcal{A}}_{i|1}\right]\nonumber\\
&\overset{(d)}{\leq}\Pr\left[{\mathcal{A}}_{1|1}^c\right]+e^{-n\delta}
\overset{(e)}{\leq}\Pr\left[{\mathcal{A}}_{1|1}^c\right]+\frac{\epsilon^2}{2+\epsilon}
\end{align}
where (b) follows from the symmetry in random codebook construction, (c) results from applying the Union bound, and (d) is obtained by invoking Lemma 3 from \cite{fong2015non}. Finally, (e) follows by setting $n\delta=n^{\frac{1}{4}}$, and by further noting that 
$n\geq \left(\log\left(\frac{2+\epsilon}{\epsilon^2}\right)\right)^4$,
which follows from the constraint in (\ref{eq:mainc11}).
Before proceeding further, 
let us assume that $M$ is a unique integer that satisfies (\ref{eq:assumeM}).

\begin{align}\label{eq:assumeM}
\log(M+1)\geq n\mathbb{E}\left[\log\left(\frac{p_{Y|X}(Y|X))}{p_{Y}(Y)}\right)\right]
-\left({\frac{2+\epsilon}{\epsilon}n\mathtt{{Var}}\left[\log\left(\frac{p_{Y|X}(Y|X))}{p_{Y}(Y)}\right)\right]}\right)^{\frac{1}{2}}>\log(M)
\end{align}
To find a bound for $\Pr\left[{\mathcal{A}}_{1|1}^c\right]$, consider the following set of inequalities in (\ref{eq:q1}) 
\begin{align}
\label{eq:q1}
\Pr\left[{\mathcal{A}}_{1|1}^c\right]&\overset{(a)}{=}\Pr\left[\log\left(\frac{p_{Y^n|X^n}(X^n(1)+V^n|X^n(1))}{p_{Y^n}(X^n(1)+V^n)}\right)\leq\log(M)+n^{\frac{1}{4}}\right]\nonumber\\
&=\Pr\left[\sum_{k=1}^{n}\log\left(\frac{p_{Y|X}(X_k(1)+V_k|X_k(1))}{p_{Y}(X_k(1)+V_k)}\right)\leq\log(M)+n^{\frac{1}{4}}\right]\nonumber\\
&\overset{(b)}{\leq}\Pr\Bigg[\sum_{k=1}^{n}\log\left(\frac{p_{Y|X}(X_k(1)+V_k|X_k(1))}{p_{Y}(X_k(1)+V_k)}\right)\leq\nonumber\\
&\qquad\qquad\qquad n\mathbb{E}\left[\log\left(\frac{p_{Y|X}(Y|X))}{p_{Y}(Y)}\right)\right]-\left(\frac{2+\epsilon}{\epsilon}n\textrm{Var}\left[{\log\left(\frac{p_{Y|X}(Y|X))}{p_{Y}(Y)}\right)}\right]\right)^{\frac{1}{2}}\Bigg]\nonumber\\
&\leq\Pr\Bigg[\left|\sum_{k=1}^{n}\log\left(\frac{p_{Y|X}(X_k(1)+V_k|X_k(1))}{p_{Y}(X_k(1)+V_k)}\right)-
n\mathbb{E}\left[\log\left(\frac{p_{Y|X}(Y|X))}{p_{Y}(Y)}\right)\right]
\right|\geq\nonumber\\
&\qquad\qquad\qquad\qquad\qquad\qquad\left(\frac{2+\epsilon}{\epsilon}n\textrm{Var}\left[{\log\left(\frac{p_{Y|X}(Y|X))}{p_{Y}(Y)}\right)}\right]\right)^{\frac{1}{2}}\Bigg]\nonumber\\
&\overset{(c)}{\leq}\frac{\epsilon}{2+\epsilon}
\end{align}
where ($a$) follows from the definition of ${\mathcal{A}}_{i|j}$ in (\ref{eq:Aij}), while the bound in ($b$) results from (\ref{eq:assumeM}). Finally, ($c$) is obtained by applying Chebychev's inequality. From (\ref{eq:Aij}) and (\ref{eq:q1}), it follows that 
$\Pr\left[\mathcal{G}(X^n(W)+V^n)\neq W\right]
=\frac{\epsilon+\epsilon^2}{2+\epsilon}$;
and further using (\ref{eq:decodingerror}), we conclude that 
$\Pr\left[\mathcal{G}({Y}^n)\neq W\right]\leq\epsilon$,
where $W$ is the transmitted message. Therefore, we conclude that the constructed code is an $(n+m,M,\epsilon)$-code that satisfies the following equations (\ref{eq:c1})-(\ref{eq:c3}).

\begin{align}\label{eq:c1}
\log(M+1)\geq n\mathbb{E}\left[\log\left(\frac{p_{Y|X}(Y|X))}{p_{Y}(Y)}\right)\right]-\left({\frac{2+\epsilon}{\epsilon}n\mathtt{{Var}}\left[\log\left(\frac{p_{Y|X}(Y|X))}{p_{Y}(Y)}\right)\right]}\right)^{\frac{1}{2}}
\end{align}

\begin{align}\label{eq:c2}
\log(M+1)&\geq \frac{n}{2}\log(1+\gamma)
-\sqrt{\frac{2+\epsilon}{\epsilon}\frac{\gamma}{1+\gamma}n}-n^{\frac{1}{4}}\\ 
\label{eq:c3}\log(M)&\geq \frac{n}{2}\log(1+\gamma)
-\sqrt{\frac{2+\epsilon}{\epsilon}\frac{\gamma}{1+\gamma}n}-n^{\frac{1}{4}} -1 
\end{align}
Here, (\ref{eq:c2}) is obtained by noting that the mutual information $\mathbb{E}\left[\log\left(\frac{p_{Y|X}(Y|X))}{p_{Y}(Y)}\right)\right]=\frac{1}{2}\log\left(1+\gamma\right)$, while the variance $\mathtt{{Var}}\left[\log\left(\frac{p_{Y|X}(Y|X))}{p_{Y}(Y)}\right)\right]=\frac{\gamma}{1+\gamma}$. The last equation follows by noting that $\log\left(M+1\right)-\log\left(M\right)<1$. Using (\ref{eq:c3}) with the constraints in (\ref{eq:mainc1}) and (\ref{eq:mainc2}) completes the proof.
\subsection*{Proof of Corollary \ref{cor: opt power}}
The proof follows by differentiating (\ref{eq:prop2}) with respect to $P_{\rm{t}}$ and setting $\frac{\partial R^{\infty}_{\rm{EH}}}{\partial P_{\rm{t}}}=0$. This leads to the following equation after simplification. 
\begin{align}\label{eq:proofopt}
\left(P_{\rm{t}}+\sigma^2\right)\log\left(P_{\rm{t}}+\sigma^2\right)&=
\left(1+\log\left(\sigma^2\right)\right)\left(P_{\rm{t}}+\sigma^2\right)+
P_{\rm{E}}\log\left(1+0.5\epsilon\right)-\sigma^2
\end{align}  
With the following change of variables $x=P_{\rm{t}}+\sigma^2$, $c=P_{\rm{E}}\log\left(1+0.5\epsilon\right)-\sigma^2$, and $d=1+\log\left(\sigma^2\right)$, (\ref{eq:proofopt}) can be written as 
$x\log(x)=c+dx$ 
which has the solution
$x=\frac{c}{\text{W}\left[c\exp(-d)\right]}$. Back substituting $x$, $c$, and $d$ in the solution yields (\ref{eq: opt P_t}).

\section*{Appendix B: Multiple Power Beacons}
\subsection*{Energy Supply Probability}
We now derive an exact expression for the energy supply probability in a Poisson network with multiple power beacons. Recall that the harvested energy in a given slot is $Z^{}=P_{\rm{PB}}\mu\sum\limits_{x_k\in\Phi}{\frac{H_k}{ \ell\left(\|x_k\|, \eta\right)}}$. From the definition of the energy supply probability, it follows that
	\begin{align}\label{eq:prop mult1}
P_{\textrm{es}}^{\rm{MP}}\left(m,n,a,\lambda,\eta\right)&=\Pr\left[\sum_{i=1}^{n}X_i^2\leq mZ\right]
	\overset{(a)}{=}\Pr\left[W \leq \frac{mZ}{P_\textrm{t}}\right]\\
	&\overset{(b)}{=}1-\mathbb{E}\left[\sum\limits_{\ell=0}^{\frac{n}{2}-1}\frac{\left(m
	Z\right)^\ell}{(2P_{\rm{t}})^\ell\ell!}e^{-\frac{m}{2P_{\rm{t}}}Z}\right]\\
	&\overset{(c)}{=}1-\sum\limits_{\ell=0}^{\frac{n}{2}-1}(-1)^{\ell}\frac{ {m}^{\ell}}{2^\ell P_t^\ell \ell!}\frac{\text{d}^\ell}{\text{d}s^\ell}\mathcal{L}_Z(s)|_{s=\frac{m}{2P_{\rm{t}}}}
	\end{align}
	where $(a)$ follows by the substitution $W=\sum\limits_{i=1}^{n}\frac{X_i^2}{P_\textrm{t}}$ such that $W$ is a Chi-squared random variable with $n$ degrees of freedom. Equality $(b)$ is obtained by conditioning on the random variable $Z$, and by using the cumulative distribution function of a Gamma random variable (since $W$ can be viewed as a Gamma random variable $\rm{Ga}\left(\frac{n}{2},2\right)$ with shape $\frac{n}{2}$ and scale ${2}$).	Finally, $(c)$ follows from the definition of a Laplace transform of a random variable $X$, namely, $\mathcal{L}_X(s)=\mathbb{E}[e^{-sX}]$, and by invoking the property $\mathbb{E}[X^\ell e^{-sX}]=(-1)^\ell\frac{\text{d}^\ell}{\text{d}s^\ell}\mathcal{L}_X(s)$. 

\subsection*{Proof of Lemma \ref{lemma:laplace}}
We now derive the Laplace transform $\mathcal{L}_Z(s)$ for the path loss model $\ell\left(r,\eta\right)\triangleq\max(1,r^\eta)$ considered in Lemma \ref{lemma:laplace}. 
\allowdisplaybreaks{
\begin{align}
\mathbb{E}\left[e^{-sZ}\right]&=\mathbb{E}\left[e^{-sP_{\rm{PB}}\mu\sum\limits_{x_k\in\Phi}{\frac{H_k}{ \ell\left(\|x_k\|,\eta\right)}}}\right]=
\mathbb{E}\left[\prod\limits_{x_k\in\Phi}e^{-sP_{\rm{PB}}\mu{\frac{H_k}{ \ell\left(\|x_k\|,\eta\right)}}}\right]
\nonumber\\
&=\mathbb{E}_\Phi\left[\prod\limits_{x_k\in\Phi}\mathbb{E}_{H_k}\left[e^{-sP_{\rm{PB}}\mu{\frac{H_k}{ \ell\left(\|x_k\|,\eta\right)}}}\right]\right]\overset{(a)}{=}
\mathbb{E}_\Phi\left[\prod\limits_{x_k\in\Phi}\frac{1}{1+sP_{\rm{PB}}\mu\ell\left(\|x_k\|,\eta\right)^{-1}}\right]
\nonumber\\
&\overset{(b)}{=}\exp\left(-2\pi\lambda\int\limits_{0}^{1}\left[1-\frac{1}{1+sP_{\rm{PB}}\mu}r\text{d}r\right]-
2\pi\lambda\int\limits_{1}^{\infty}\left[1-\frac{1}{1+sP_{\rm{PB}}\mu r^{-\eta}}r\text{d}r\right]\right)
\nonumber\\
&\overset{(c)}{=}\exp\left(-\pi\lambda\frac{{s}P_{\rm{PB}}\mu}{1+{s}P_{\rm{PB}}\mu}\right)
\exp\left(-\pi\lambda~\mathcal{F}\left({s}P_{\rm{PB}}\mu,\eta\right)\right)
\end{align}}where $(a)$ follows from the independence of small-scale fading across the PB-EH links, and by further conditioning on the locations of the PB nodes, and $(b)$ is obtained by invoking the probability generating functional (PGFL) of the PPP $\Phi$ \cite{haenggi2012stochastic}. Finally, $(c)$ results by expressing the integrals in terms of the hypergeometric function as defined in (\ref{eq:functionF}).

\subsection*{Proof of Lemma 3}
The proof follows by noting that $\mathbb{E}[H_k]=1$, and by applying Campbell's theorem\cite{haenggi2012stochastic} to obtain 
$\mathbb{E}\left[Z\right]=P_{\rm{PB}}\mu 2\pi\lambda\left(\int_{0}^{1}r\text{d}r + 
\int_{1}^{\infty}r^{1-\eta}\text{d}r\right)=
P_{\rm{PB}}\mu\lambda\pi\frac{\eta}{\eta-2}$.
\subsection*{Achievable Rate}
The achievable rate for the case of multiple power beacons can be derived following the procedure in Appendix A. Similar to (\ref{eq:eneergyBound1}), we first bound the energy outage probability as
\begin{align}\label{eq:eneergyBound2}
\Pr\left[{\bigcup_{k=1}^{n}}\left\{\sum_{\ell=1}^{k}X_\ell^2\geq\sum_{i=1}^{m}Z_i\right\} \right]=
\sum\limits_{i=0}^{\frac{n}{2}-1}(-1)^{i}\frac{{m}^{i}}{(2a)^ii!}\frac{\text{d}^i}{\text{d}s^i}\mathcal{L}_Z(s)|_{s=\frac{m}{2a}}
&\leq \frac{\epsilon}{2+\epsilon}
\end{align}
where we have used the expression (and the notation) from Proposition \ref{lemma: multi_pec}. Following steps similar to (\ref{eq:rule})-(\ref{eq:c2}), we recover the result presented in Theorem \ref{th:MP}.

\bibliographystyle{ieeetr}
\end{document}